\let\csname equation*\endcsname\relax
\let\csname endequation*\endcsname\relax
\setlist[enumerate,1]{label={\roman*)}}
\newcommand{\X}{\mathcal{X}}
\theoremstyle{plain}
\newtheorem{theorem}{Theorem}
\newtheorem*{theorem*}{Theorem}
\newtheorem*{lemma*}{Lemma}
\newtheorem*{proposition*}{Proposition}
\newtheorem*{corollary*}{Corollary}
\theoremstyle{definition}
\newtheorem{definition}{Definition}
\newtheorem*{definition*}{Definition}
\newtheorem{example}{Example}
\newtheorem*{example*}{Example}
\theoremstyle{remark}
\newtheorem{remark}{Remark}
\newtheorem*{remark*}{Remark}
\newtheorem*{conjecture*}{Conjecture}
\newtheorem*{problem*}{Problem}
\newcommand*{\NN}{\mathbb{N}}
\newcommand*{\FF}{\mathbb{F}}
\newcommand*{\RR}{\mathbb{R}}
\let\R\RR
\newcommand*{\dd}{\mathrm{d}}
\newcommand*{\contr}[1]{\iota_{#1}}
\newcommand*{\liedv}[1]{\mathcal{L}_{#1}}
\DeclareMathOperator{\Image}{Im}
\let\Im\Image
\newcommand{\norm}[1]{\left\lvert\left\lvert #1 \right\rvert\right\rvert}
\newcommand{\hybrid}{\mathscr{H}}
\newcommand{\T}{T}
\newcommand{\cT}{T^{\ast}}
\newcommand{\Cinfty}{C^{\infty}}
\newcommand\restr[2]{{
  \left.\kern-\nulldelimiterspace 
  #1 
  \right|_{#2} 
}}
\let\oldemph\emph
\let\emph\textbf
\renewcommand{\jourvoldelim}{\addcomma\space}
\begin{document}


\title[Hamilton--Jacobi theory for nonholonomic and forced hybrid systems]{Hamilton--Jacobi theory for nonholonomic and forced hybrid mechanical systems}

\author{Leonardo Colombo$^{1}$, Manuel de Le\'on$^{2,3}$, Mar\'ia Emma Eyrea Iraz\'u$^4$ and Asier L\'opez-Gord\'on$^{2}$\footnote{Author to whom correspondence should be addressed.}}

\date{\today}

\address{$^{1}$Centro de Automática y Robótica (CSIC-UPM), 
Carretera de Campo Real, km 0, 200, 28500 Arganda del Rey, Spain.\\$^{2}$Instituto de Ciencias Matemáticas (CSIC-UAM-UC3M-UCM) 
Calle Nicolás Cabrera, 13-15, Campus Cantoblanco, UAM, 28049 Madrid, Spain.\\ $^{3}$Real Academia de Ciencias Exactas, Físicas y Naturales Calle Valverde, 22, 28004, Madrid, Spain.\\
$^{4}$CONICET-CMaLP-Department of Mathematics, Universidad Nacional de La Plata. 
Calle 1 y 115, La Plata 1900, Buenos Aires, Argentina
}
\ead{leonardo.colombo@car.upm-csic.es, mdeleon@icmat.es, maeemma@mate.unlp.edu.ar, asier.lopez@icmat.es}






\vspace{10pt}

\begin{abstract}
A hybrid system is a system whose dynamics is given by a mixture of both continuous and discrete transitions. {In particular, these systems can be utilised to describe the dynamics of a mechanical system with impacts.}
Based on the approach by Clark \cite{Clark2020}, we develop a geometric Hamilton--Jacobi theory for forced and nonholonomic hybrid dynamical systems.
We state the corresponding Hamilton--Jacobi equations for these classes of systems and apply our results to analyze some examples. 

\medskip

\noindent{\bf Keywords:} hybrid dynamical systems, {mechanical systems with impacts}, Hamilton--Jacobi equation, nonholonomic constraints

\medskip

\noindent{\bf MSC\,2020 codes:} 70F35, 93C30 (primary); 53Z05, 70F40, 70H20 (secondary)

\end{abstract}


%
%
%
%
%


\section{Introduction}

As it is well-known, in classical mechanics, Hamilton--Jacobi theory is a way to integrate a
system of ordinary differential equations (Hamilton equations) that, through an appropriate
canonical transformation, is led to equilibrium. The equation to be satisfied by the
generating function of this transformation is a partial differential equation whose solution allows
us to integrate the original system. In this respect, Hamilton--Jacobi theory provides important physical examples of the deep connection between first-order partial differential equations and systems of first-order ordinary differential equations. 
Moreover, Hamilton--Jacobi theory provides a remarkably powerful method to integrate the dynamics of many Hamiltonian systems. In particular, for a completely integrable system, knowing a complete solution of the Hamilton--Jacobi problem, the dynamics of the system can be easily reduced to quadratures (see \cite{Grillo2021, Grillo2021a, Grillo2016}).
For these reasons, Hamilton--Jacobi theory has been a matter of
continuous interest and has been studied classically as well as in other contexts \cite{Dominici1984}.

From the viewpoint of geometric mechanics, the intrinsic formulation of Hamilton--Jacobi equation is also clear. Nevertheless, in the papers \cite{Carinena2006, Iglesias-Ponte2008} a new geometric framework for
the Hamilton--Jacobi theory was presented and the Hamilton--Jacobi equation was formulated
both in the Lagrangian and in the Hamiltonian formalisms of autonomous and non-autonomous
mechanics. A similar generalization of the Hamilton--Jacobi formalism was outlined in \cite{Krupkova}. {This geometric framework was used to extend the Hamilton--Jacobi theory for several contexts (see \cite{Esen2022} and references therein).}

The aim of this paper is to go ahead with this program and develop the geometric description of the Hamilton--Jacobi theory for nonholonomic and forced {mechanical systems with impacts, modeled as} hybrid dynamical systems.
{A Hamilton--Jacobi for hybrid Hamiltonian systems has been studied by Clark \cite{Clark2020}, whose approach we have extended to forced and nonholonomic hybrid systems.}
Hybrid systems are dynamical systems with continuous-time and discrete-time components in their dynamics. This class of dynamical systems is capable of modeling several physical systems, such as multiple UAV (unmanned aerial
vehicles) systems \cite{Lee2013} and legged robots  \cite{Westervelt2018}, among many others \cite{Goebel2012, goodman2019existence, vanderSchaft2000}. Simple hybrid systems are a class of hybrid systems introduced in \cite{Johnson1994}, denoted as such because of their simple structure. A simple hybrid system is characterized by a tuple $\hybrid=(\mathcal{X}, S, z, \Delta)$, where $\mathcal{X}$ is a smooth manifold, $z$ is a smooth vector field on $\mathcal{X}$, $S$ is an embedded submanifold of $\mathcal{X}$ with co-dimension $1$, and $\Delta:S\to \mathcal{X}$ is a smooth embedding. This type of hybrid system has been mainly employed for the understanding of locomotion gaits in bipeds and insects \cite{colombo2020symmetries, Holmes2006, Westervelt2018}, and are of great interest within the optimal control community \cite{Clark2019,Clark2021,Clark2021a,Clark2023,Goebel2012,vanderSchaft2000}.
{Moreover, if $\mathcal{X}$ is the tangent (resp.~cotangent) bundle of a manifold and $\Delta$ does not change the base point then the hybrid system can be used to model a mechanical system with impacts. In that case, $\Delta$ represents the instantaneous changes of velocity (resp.~momentum) that occur at the impacts. It is worth mentioning that one could consider more general hybrid systems, for instance, systems for which the Hamiltonian function or the symplectic structure are modified at certain instants. Nevertheless, such generalizations will not be necessary in order to model mechanical systems with impacts.}

The remainder of the paper is structured as follows. After establishing some basic concepts and notation about forced Hamiltonian systems, hybrid dynamical systems and the geometric Hamilton--Jacobi theory in Section \ref{sec2}, we introduce the geometric Hamilton--Jacobi theory for hybrid dynamical systems in Section \ref{sec3}.
{We show a Hamilton--Jacobi theorem for hybrid systems (Theorem \ref{HJ_theorem_hybrid}), based on Theorem VIII.3 from \cite{Clark2020}. This result is subsequently applied}
in Section \ref{sec4} to forced hybrid mechanical systems (Theorem \ref{HJ_theorem_hybrid_forced}) and to nonholonomic hybrid mechanical systems (Theorem \ref{HJ_theorem_hybrid_nh}) in Section \ref{sec:nonholonomic}. Several examples are shown in the paper. {Finally, in Section~\ref{sec:conclusions} we mention some possible future lines of research opened up by the results of this paper.}

\section{Preliminaries on hybrid dynamical systems and Hamilton--Jacobi theory}\label{sec2}
We begin by describing the class of systems we will consider in this work as well as reviewing the Hamilton--Jacobi theory for Hamiltonian mechanical systems.

\subsection{Forced Hamiltonian systems}
Let $Q$ be an $n$-dimensional manifold, representing the space of positions of a mechanical system. Let $T^\ast Q$ denote its cotangent bundle, with canonical projection $\pi_Q: T^\ast Q\to Q$. If $Q$ has local coordinates $(q^i)$, then $T^\ast Q$ has induced bundle coordinates $(q^i, p_i)$. As it is well-known, the cotangent bundle is endowed with a canonical one-form $\theta_Q= p_i \dd q^i$, which defines a canonical symplectic form $\omega_Q=-\dd \theta_Q=\dd q^i \wedge \dd p_i$. 

For each function $H$ on $T^\ast Q$, the symplectic form $\omega_Q$ defines a unique vector field $X_H$ on $T^\ast Q$ given by    $\omega_Q\left(X_H, \cdot \right) = \dd H$, which is called the \emph{Hamiltonian vector field} of $H$. Locally,
\begin{equation}
    X_H = \frac{\partial H}{\partial p_i} \frac{\partial}{\partial q^i} - \frac{\partial H}{\partial q^i} \frac{\partial}{\partial p_i}, 
\end{equation}
so its integral curves are given by \textit{Hamilton's equations}
\begin{equation}
\frac{\dd q^i}{\dd t} = \frac{\partial H}{\partial p_i},\quad \frac{\dd p_i}{\dd t} = - \frac{\partial H}{\partial q^i}.
\end{equation}
Therefore, the dynamics of a mechanical system on $Q$ with the Hamiltonian function $H\colon T^\ast Q \to \R$ can be characterized by the Hamiltonian vector field $X_H$.

Consider now a forced Hamiltonian system $(H, F)$, with $H$ a Hamiltonian function on $T^\ast Q$ and $F=F_i(q, p) \dd q^i$ a semibasic one-form on $T^\ast Q$ representing an external (non-conservative) force. Then, we can define the vector field $Z_F$ by $\omega_Q\left(Z_F, \cdot\right) = F$, or, locally, $Z_F = -F_i \frac{\partial}{\partial p_i}$.

We also define the \emph{forced Hamiltonian vector field} $X_{H,F} = X_H + Z_F$, locally given by
\begin{equation}
    X_{H,F}= \frac{\partial H}{\partial p_i} \frac{\partial}{\partial q^i} - \left(\frac{\partial H}{\partial q^i} + F_i\right) \frac{\partial}{\partial p_i}, 
\end{equation}
so that its integral curves are given by the \textit{forced Hamilton's equations}
\begin{equation}
 \frac{\dd q^i}{\dd t} = \frac{\partial H}{\partial p_i},\quad \frac{\dd p_i}{\dd t} = - \frac{\partial H}{\partial q^i} - F_i.
\end{equation}

\subsection{Hybrid dynamical systems}
Hybrid dynamical systems are dynamical systems characterized by their mixed behavior of continuous and discrete dynamics where the transition is determined by the time when the continuous flow switches from the ambient space to a co-dimensional one submanifold. This class of dynamical systems is given by a $4$-tuple $\hybrid=(\X,S,z,\Delta)$. The pair ($\X$,$z$) describes the continuous dynamics as \begin{equation*}
\dot{x}(t) = z(x(t))
\end{equation*} 
where $\X$ is a smooth manifold and $z$ a $C^1$ vector field on $\X$ with flow $\varphi_t:\X\rightarrow\X$. Additionally, ($S$,$\Delta$) describes the discrete dynamics as 
$x^{+}=\Delta(x^{-})$ where $S\subset\X$ is a smooth submanifold of co-dimension one called the \emph{impact surface}.

The hybrid dynamical system describing the combination of both dynamics is given by
\begin{equation}\label{sigma}\Sigma_{\hybrid}: \begin{cases}
\dot{x} = z(x),& x\not\in S\\
x^+ = \Delta(x^-),& x^-\in S.
\end{cases}\end{equation}
The flow of the hybrid dynamical system \eqref{sigma} is denoted by $\varphi_t^{\hybrid}$. This may cause a little confusion around the break points, that is, where $\varphi_{t_0}(x)\in S$. It is not clear whether $\varphi_{t_0}^{\hybrid}(x) = \varphi_{t_0}(x)$ or $\varphi_{t_0}^{\hybrid}(x) = \Delta(\varphi_{t_0}(x))$. In other words, if the state at the time of impact with the submanifold $S$ is $x^-$ or $x^+$. We will take the second value, i.e. $\varphi_{t_0}^{\hybrid}(x)=x^+$.

A solution of a hybrid dynamical system may experience a Zeno state if infinitely many impacts occur in a finite amount of time. To exclude these types of situations, we require the set of impact times to be closed and discrete. {As in \cite{Westervelt2018}, we will assume implicitly throughout the remainder of the paper that $\overline{\Delta}({S})\cap{S}=\emptyset$, where $\overline{\Delta}({S})$ denotes the closure of $\Delta({S})$.} 
\begin{definition}
    Let $\hybrid=(\X,S,z,\Delta)$ be a hybrid dynamical system.
    A function $f$ on $\mathcal X$ is called a \emph{hybrid constant of the motion} if it is preserved by the hybrid flow, namely, {$f \circ \varphi_t^\hybrid = f$.} In other words,
    $z(f)=0$ and $ f \circ \Delta = f \circ i$, where $i\colon S \hookrightarrow \X$ is the canonical inclusion.
\end{definition}

	A hybrid dynamical system $(\X,S,z,\Delta)$ is said to be a \emph{hybrid Hamiltonian system} if it is determined by {$\hybrid_H\coloneqq (T^{*}Q, {S_H}, X_{H}, \Delta_H)$}, where $X_{H}:T^{*}Q\to T(T^{*}Q)$ is the Hamiltonian vector field associated with the Hamiltonian function $H$, 
{${S_H}$ is the switching surface, a submanifold of $T^{*}Q$ such that $\pi_Q(S_H)$ is a codimension-one submanifold of $Q$, and $\Delta_H:{S}_H\to T^{*}Q$ is the impact map, a smooth embedding that preserves the base point, namely, $\pi_Q\circ \Delta_H = \pi_Q$, where $\pi_Q\colon T^{*}Q \to Q$ denotes the canonical projection of the cotangent bundle. The dynamics may be restricted to a \emph{domain} $\X$, that is, a submanifold $\X = \cT U\subseteq \cT Q$, where $U\subseteq Q$ is an open subset.}

  The dynamical system generated by $\hybrid_H$ is given by
  \begin{equation}
    \label{RHDS}\Sigma_{\hybrid_H}:
    \left\{\begin{array}{ll}\dot{x}(t)=X_{H}(x(t)), & \hbox{ if } x(t)\notin{S_H},\\ x^{+}(t)=\Delta_H(x^{-}(t)),&\hbox{ if } x^-(t)\in{S_H}, 
    \end{array}\right.
  \end{equation}
  where $x(t)=(q(t),{p}(t))\in T^{*}Q$, 
  and $x^{-}$, $x^{+}$ denote the states immediately before and after the times when $x(t)$ intersect with $S_H$, namely 
  $$x^{-}(t)\coloneqq \displaystyle{\lim_{\tau\to t^{-}}}x(\tau), \qquad 
  x^{+}(t)\coloneqq \displaystyle{\lim_{\tau\to t^{+}}}x(\tau)\, .$$


Similarly, a hybrid system is called a \emph{forced hybrid Hamiltonian system} if it is determined by $\hybrid_{H, F}\coloneqq (T^{*}Q, {S_H}, X_{H,F}, \Delta_H)$, where $X_{H,F}:T^{*}Q\to T(T^{*}Q)$ is the forced Hamiltonian vector field associated with the forced Hamiltonian system $(H,F)$. {As in the unforced case, it is assumed that $\pi_Q(S_H)$ is a codimension-$1$ submanifold of $Q$ and $\pi_Q \circ \Delta_H = \pi_Q$.}

Physically, $\pi_Q(S_H)$ can be usually regarded as the wall where the impact occurs, while $S_H$ is the subspace of the phase space in which positions are in the wall and momenta are pointing to the wall. The impact map represents the change of momenta produced in the instant of the impact, e.g., an elastic collision with a wall. In many examples, the switching surface and the impact map are determined as follows.

\begin{remark}[Newtonian impact law]\label{remark:Newtonian_Hamiltonian_impact_law}
    Consider a mechanical Hamiltonian function $H\in \Cinfty(\cT Q)$, namely,
     \begin{equation}
         H (q, p) = g^{-1}_q(p,p) - V(q)\, ,
     \end{equation}
     where $g$ is a Riemannian metric, $g_q^{-1}$ denotes the inverse of $g$ at $q$ and $V$ is a function on $Q$. Let $h\in \Cinfty(Q)$ be a function such that $0$ is a regular value, that is, $h^{-1}(0)$ is a submanifold of $Q$, called the \emph{constraint function}. These functions define a hybrid Hamiltonian system where
     \begin{enumerate}
         \item the domain is $\X = \left\{(q, p)\in \cT Q  \mid h (q) \geq 0 \right\}$,
         \item $X_H$ is the Hamiltonian vector field of $H$,
         \item the switching surface is
         \begin{equation}
             S = \left\{(q, p)\in \cT Q \mid h(q) = 0\, , \ \langle\langle p, \dd h_q\rangle\rangle_q < 0 \right\}\, ,
         \end{equation}
         \item the impact map is $\Delta (q, p) = \big(q, P_q(p)\big)$, with $P_q \colon \cT_q Q \to \cT_q Q$ being the map given by
         \begin{equation}
             P_q (p) = p - (1+e) \frac{\langle\langle p, \dd h_q\rangle\rangle_q}{\norm{\dd h_q}_q^2} \dd h_q\, ,
         \end{equation}
     \end{enumerate}
     where $\norm{\cdot}_q$ and $\langle\langle \cdot, \cdot\rangle\rangle_q$ denote the norm and the inner product defined by $g$ on $\cT Q$, and $e\in [0,1]$ is a constant called the \emph{coefficient of restitution}. In particular, $e=1$ and $e=0$ correspond to purely elastic and purely inelastic impacts, respectively. Of course, by replacing $X_H$ with $X_{H,F}$, a hybrid forced Hamiltonian system may also be described by this law.
 \end{remark}

{
\begin{remark}
    Although this is not required by any of the results in the paper, it is possible to assume that the impact map ``preserves the symplectic structure'' in the sense that $\Delta_H^\ast \omega_Q = \restr{\omega_Q}{S_H}$, then the impact map must take the form
    \begin{equation}
        \Delta_H(q, p) = \big(q, p + f \dd h_q\big)\, ,
    \end{equation}
    for a function $f$ on $\cT Q$ and a function $h$ on $Q$. In particular, this condition is satisfied by the examples on the paper.
\end{remark}
}

\subsection{Geometric Hamilton--Jacobi theory}

Consider a Hamiltonian function $H\colon T^\ast Q \to \R$.
In the standard formulation, the Hamilton--Jacobi problem consists on finding a function ${\mathcal{S}}\colon \R\times Q \to \R$, known as the principal function, such that the partial differential equation (PDE) 
\begin{equation} \label{1}
    \frac{\partial {\mathcal{S}}}{\partial t}+H\left(q,\frac{\partial {\mathcal{S}}}{\partial q}\right)=0
\end{equation}
is satisfied. If we assume ${\mathcal{S}}(t,q)=W(q)-tE,$ where $E$ is a constant, then the function $W$, known as the characteristic function, has to satisfy
\begin{equation} 
    H\left(q,\frac{\partial W}{\partial q}\right)=E.
    \label{HJ_classical}
\end{equation}
Both of the above PDEs are known as the Hamilton--Jacobi equation.

Hamilton--Jacobi theory can be given the following geometrical interpretation \cite{Abraham2008,Carinena2006}. Given the Hamiltonian vector field $X_H$ of $H$, we want to find a section $\gamma$ of $\pi_Q: T^\ast Q \to Q$ which maps integral curves of the projected vector field $X_H^\gamma= T \pi_Q \circ X_H \circ \gamma$ {on $Q$} into integral curves of $X_H$. In other words, we look for a one-form $\gamma$ on $Q$ such that
\begin{equation}
\label{gamma_related_int_curves}
X_H^\gamma\circ \sigma(t)=\frac{\mathrm{d} } {\mathrm{d}t}\sigma(t) \Longrightarrow  X_H\circ( \gamma\circ \sigma(t))=\frac{\mathrm{d} } {\mathrm{d}t}\left(\gamma\circ\sigma (t)\right)  ,  
\end{equation}
{for every integral curve $\sigma \colon I \subseteq \R \to Q$ of $X_H^\gamma$.}
The one-form $\gamma$ satisfies the condition \eqref{gamma_related_int_curves} if, and only if, the vector fields $X_H^\gamma$ and $X_H$ are $\gamma$-related, that is,
\begin{equation}
    X_H\circ \gamma=T\gamma\circ X_H^\gamma.
    \label{X_gamma_related}
\end{equation}
In other words, the following diagram commutes: 
\begin{center}
\begin{tikzcd}
T^\ast Q \arrow[rr, "X_{H}"] \arrow[d, "\pi_{Q}"] &  & T T^\ast Q \arrow[d, "T\pi_{Q}"']    \\
Q \arrow[rr, "X_H^{\gamma}"'] \arrow[u, "\gamma", bend left, shift left]    &  & TQ \arrow[u, "T\gamma"', bend right, shift right]
\end{tikzcd}
\end{center}
Locally,
\begin{equation}
     X_H\circ \gamma = \frac{\partial H} {\partial p_i  } \frac{\partial  } {\partial q^i} - \frac{\partial H} {\partial q^i} \frac{\partial  } {\partial p_i},
\end{equation}
and
\begin{equation}
    T\gamma\circ X_H^\gamma = \frac{\partial H} {\partial p_i  } \frac{\partial  } {\partial q^i} + \frac{\partial H} {\partial p_j} \frac{\partial \gamma_i } {\partial q^j} \frac{\partial}{\partial p_i},
\end{equation}
{where the right-hand sides of these equations are evaluated along $\gamma$}, so equation~\eqref{X_gamma_related} holds if and only if
\begin{equation}
    - \frac{\partial H} {\partial q^i} = \frac{\partial H} {\partial p_j} \frac{\partial \gamma_i} {\partial q^j}.
\end{equation}

In addition, if $\gamma$ is closed, we have that
\begin{equation}
    \frac{\partial \gamma_i} {\partial q^j} = \frac{\partial \gamma_j} {\partial q^i}, 
\end{equation}
so
\begin{equation}
    - \frac{\partial H} {\partial q^i} = \frac{\partial H} {\partial p_j} \frac{\partial \gamma_j} {\partial q^i},
\end{equation}
that is,
\begin{equation}
    0 = \left(  \frac{\partial H} {\partial q^i}  + \frac{\partial H} {\partial p_j} \frac{\partial \gamma_j} {\partial q^i} \right) \dd q^i 
    = \dd \left(H \circ \gamma  \right).
    \label{HJ_differential}
\end{equation}
{Since $\gamma$ is closed, by the Poincaré Lemma it is locally exact. More specifically, around each point $q\in Q$ there exists an open subset $U\subseteq Q$ such that $\restr{\gamma}{U} = \dd W$ for a function $W\in \Cinfty(U)$. Hence, equation~\eqref{HJ_differential} can be locally written as equation~\eqref{HJ_classical}.}


Additionally, if a one-form $\gamma$ on $Q$ satisfies equation~\eqref{X_gamma_related}, then $X_H\circ \gamma$ is contained in the tangent space of $\Im \gamma$.

Combining all the above, we have proven the following:

\begin{theorem}[Hamilton--Jacobi theorem] \label{HJ_theorem}
Consider a Hamiltonian function $H$ on $T^\ast Q$.
Let $\gamma$ be a closed one-form and let $X_H^\gamma = T \pi_Q \circ X_H \circ \gamma$. Then, the following assertions are equivalent:
\begin{enumerate}
    \item If $\sigma:\mathbb{R}\rightarrow Q$ is an integral curve of $X_{H}^{\gamma}$ then $\gamma\circ\sigma$ is an integral curve of $X_{H}.$
     \item $\dd(H\circ \gamma)=0$, 
    \item {$X_H$ is tangent to $\Im \gamma$.}
\end{enumerate}
A one-form $\gamma$ satisfying the statements above is called a \emph{solution of the Hamilton--Jacobi problem} for $H$.
\end{theorem}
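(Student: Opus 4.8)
The plan is to route all three equivalences through the single intermediate condition \eqref{X_gamma_related}, that $X_H$ and $X_H^\gamma$ be $\gamma$-related, since every local computation preceding the statement already pivots on it. First I would observe that statement (i) is precisely the implication \eqref{gamma_related_int_curves}, and argue that it holds if and only if \eqref{X_gamma_related} does. The nontrivial direction is that (i) forces $\gamma$-relatedness: through any $q\in Q$ the existence theorem for ordinary differential equations produces an integral curve $\sigma$ of $X_H^\gamma$ with $\sigma(0)=q$; since $\frac{\dd}{\dd t}(\gamma\circ\sigma) = T\gamma(\dot\sigma) = T\gamma\big(X_H^\gamma\circ\sigma\big)$, the hypothesis that $\gamma\circ\sigma$ be an integral curve of $X_H$ gives, at $t=0$, the identity $X_H(\gamma(q)) = T\gamma\big(X_H^\gamma(q)\big)$, i.e.\ \eqref{X_gamma_related} at $q$. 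The converse is the standard fact that $\gamma$-related vector fields send integral curves to integral curves.

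Next I would establish \eqref{X_gamma_related} $\Leftrightarrow$ (ii). This is exactly the coordinate computation already carried out: \eqref{X_gamma_related} is equivalent to $-\partial H/\partial q^i = (\partial H/\partial p_j)(\partial\gamma_i/\partial q^j)$, and feeding in the closedness hypothesis $\partial\gamma_i/\partial q^j = \partial\gamma_j/\partial q^i$ rewrites this as $\partial H/\partial q^i + (\partial H/\partial p_j)(\partial\gamma_j/\partial q^i)=0$, which is $\partial(H\circ\gamma)/\partial q^i=0$ for every $i$, i.e.\ $\dd(H\circ\gamma)=0$. In intrinsic terms one uses $\dd(H\circ\gamma)=\gamma^\ast\dd H$ and the two displayed expansions of $X_H\circ\gamma$ and $T\gamma\circ X_H^\gamma$ to reach the same conclusion.

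For the bridge to (iii), I would first note that its Lagrangian half is automatic from the standing hypothesis: closedness gives $\gamma^\ast\omega_Q=-\dd\gamma=0$, which together with $\dim\Im\gamma=n$ shows $\Im\gamma$ is Lagrangian independently of the other conditions. It then remains to identify invariance of $\Im\gamma$ under $X_H$ with \eqref{X_gamma_related}. The key point is that, $\gamma$ being a section, $T\gamma\colon TQ\to T(T^\ast Q)$ is an embedding with image $T(\Im\gamma)$, and $T\pi_Q$ restricted to that image inverts $T\gamma$ because $\pi_Q\circ\gamma=\mathrm{id}_Q$. Invariance means $X_H\circ\gamma$ is a section of $T(\Im\gamma)=\Im(T\gamma)$, so $X_H\circ\gamma=T\gamma\circ Y$ for a unique vector field $Y$ on $Q$; projecting by $T\pi_Q$ forces $Y=T\pi_Q\circ X_H\circ\gamma=X_H^\gamma$, which is \eqref{X_gamma_related}. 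Conversely, \eqref{X_gamma_related} displays $X_H\circ\gamma$ as tangent to $\Im\gamma$, giving invariance.

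Chaining these together---(i) $\Leftrightarrow$ \eqref{X_gamma_related} $\Leftrightarrow$ (ii) and \eqref{X_gamma_related} $\Leftrightarrow$ [invariance], with the Lagrangian property supplied for free by closedness---delivers the three-way equivalence. I expect the only genuinely delicate step to be the tangency argument for (iii): one must read ``invariant submanifold'' as $X_H$ being everywhere tangent to $\Im\gamma$ and then use the section property to upgrade this tangency to the algebraic identity \eqref{X_gamma_related}, via the isomorphism $T\pi_Q\colon T(\Im\gamma)\xrightarrow{\ \sim\ }TQ$, rather than merely asserting the two are the same.
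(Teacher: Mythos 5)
Your proposal is correct and follows essentially the same route as the paper: both hinge everything on the $\gamma$-relatedness condition \eqref{X_gamma_related}, use the displayed coordinate computation plus closedness to get $\dd(H\circ\gamma)=0$, and derive the Lagrangian property of $\Im\gamma$ directly from $\gamma^\ast\omega_Q=-\dd\gamma$. The only difference is that you supply details the paper leaves implicit, notably the ODE-existence argument for (i) $\Rightarrow$ \eqref{X_gamma_related} and the converse direction of the invariance/tangency identification in (iii), which the paper states only one way.
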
 

\begin{definition}\label{def_complete_sols}
Consider a solution $\gamma_{\lambda}$ of the Hamilton--Jacobi problem for $H$ depending on $n$ additional parameters $\lambda\subset \mathbb{R}^n $. If the map $\Phi:Q\times \mathbb{R}^n\rightarrow T^*Q$ given by $\Phi_\lambda(q)=\Phi(q,\lambda_1,...,\lambda_n)=\gamma_{\lambda}(q)$ is a local diffeomorphism, we say that $\gamma_{\lambda}(q)$ is a \emph{complete solution of the Hamilton--Jacobi problem} for $H$.
\end{definition}

\begin{remark}\label{remark:Liouville_integrability}
    A complete solution of the Hamilton--Jacobi problem for $H$ produces a foliation on Lagrangian submanifolds invariant under the Hamiltonian flow. As a matter of fact, if $\Phi\colon Q \times \R^n \to T^\ast Q$ is a complete solution, then the $n$ functions $f_a= \pi_a \circ \Phi^{-1}\colon T^\ast Q\to \R^n$ (where $\pi_a\colon Q\times \R^n \ni (q, \lambda_1, \ldots, \lambda_a\ldots, \lambda_n)\mapsto \lambda_a\in \R$ denotes the natural projection) are independent constants of the motion in involution. 
    {If a Hamiltonian system has a set of $n$ constants of the motion which are in involution and functionally independent almost everywhere, then it is called}
    completely integrable (in the sense of Liouville).
    {For more details on the notion of Liouville integrability, the reader may consult Section 1.4 in \cite{Bolsinov2004}.}
    Furthermore, complete solutions may be employed to compute action-angle coordinates \cite{Arnold1978a}.
\end{remark}

Theorem \ref{HJ_theorem} can be generalized to forced Hamiltonian systems as follows (see \cite{deLeon2022c}).

\begin{theorem}[Forced Hamilton--Jacobi theorem] \label{teoFor}
Let $(H,F)$ be a forced Hamiltonian system on $T^*Q$ and let $\gamma$ be a closed one-form on $Q$. Then the following conditions are equivalent:
\begin{enumerate}
    \item $\dd(H\circ\gamma)=-\gamma^*F$
    \item If $\sigma:\mathbb{R}\rightarrow Q$ is an integral curve of $X_{H,F}^{\gamma}$ then $\gamma\circ\sigma$ is an integral curve of $X_{H,F}.$
    \item $X_{H,F}$ is tangent to $\Im \gamma$.
\end{enumerate}
If $\gamma$ satisfies these conditions, it is called a \emph{solution of the Hamilton--Jacobi problem} for $(H,F).$
\end{theorem}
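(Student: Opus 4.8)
The plan is to follow the pattern of Theorem~\ref{HJ_theorem} verbatim, carrying the extra semibasic force term through each computation. First I would work in local bundle coordinates $(q^i,p_i)$ and write $\gamma=\gamma_i(q)\,\dd q^i$. Projecting $X_{H,F}$ gives $X_{H,F}^\gamma=\left.(\partial H/\partial p_i)\right|_{p=\gamma(q)}\,\partial/\partial q^i$, exactly as in the unforced case, since $Z_F=-F_i\,\partial/\partial p_i$ is vertical and is annihilated by $T\pi_Q$. I would then compute both $X_{H,F}\circ\gamma$ and $T\gamma\circ X_{H,F}^\gamma$ at points $p=\gamma(q)$; the $\partial/\partial q^i$–components coincide automatically, while equality of the $\partial/\partial p_i$–components amounts to the single condition
\begin{equation}
-\left(\frac{\partial H}{\partial q^i}+F_i\right)=\frac{\partial H}{\partial p_j}\frac{\partial\gamma_i}{\partial q^j}
\end{equation}
evaluated along $\Im\gamma$.

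Next I would use that $\gamma$ is closed, so $\partial\gamma_i/\partial q^j=\partial\gamma_j/\partial q^i$, to rewrite this as $\big(\partial H/\partial q^i+(\partial H/\partial p_j)\,\partial\gamma_j/\partial q^i\big)\,\dd q^i=-F_i\,\dd q^i$. The left-hand side is precisely $\dd(H\circ\gamma)$, since $(H\circ\gamma)(q)=H(q,\gamma(q))$. The key technical point — and the only place the force genuinely alters the unforced argument — is the identification of the right-hand side: because $F=F_i(q,p)\,\dd q^i$ is semibasic and $\gamma$ covers the identity on $Q$, the pullback is $\gamma^\ast F=F_i(q,\gamma(q))\,\dd q^i$. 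Hence the coordinate condition is exactly $\dd(H\circ\gamma)=-\gamma^\ast F$, establishing that condition (i) is equivalent to the intrinsic relation
\begin{equation}
X_{H,F}\circ\gamma=T\gamma\circ X_{H,F}^\gamma,
\end{equation}
that is, to $X_{H,F}$ and $X_{H,F}^\gamma$ being $\gamma$-related.

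It then remains to connect this $\gamma$-relatedness to (ii) and (iii), and for both the unforced arguments transfer without change. For (ii), the relation above is by definition the statement that $\gamma$ carries integral curves of $X_{H,F}^\gamma$ to integral curves of $X_{H,F}$, by the commuting-diagram argument of \eqref{gamma_related_int_curves}–\eqref{X_gamma_related}; I would simply observe that existence and uniqueness of integral curves through each point upgrades the implication to an equivalence, yielding (i)$\Leftrightarrow$(ii). For (iii), I would note that $X_{H,F}$ is tangent to $\Im\gamma$ precisely when $X_{H,F}\circ\gamma$ takes values in $\Im T\gamma=T(\Im\gamma)$, which — since the horizontal components already agree — is again equivalent to the displayed $\gamma$-relatedness; meanwhile the submanifold-theoretic content (that $\Im\gamma$ is a Hamiltonian submanifold) reduces to $\gamma$ being closed, exactly as the Lagrangian condition $\gamma^\ast\omega_Q=-\dd\gamma=0$ does in the preliminaries. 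Assembling these equivalences proves the theorem.

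I do not anticipate a serious obstacle: the result is a direct generalization of Theorem~\ref{HJ_theorem}, and the proof is largely a matter of bookkeeping the force term. The one spot warranting care is verifying $\gamma^\ast F=F_i(q,\gamma(q))\,\dd q^i$ and thereby recognizing $-\gamma^\ast F$ as precisely the force contribution appearing on the momentum side of the $\gamma$-related condition; everything else is structurally identical to the unforced proof.
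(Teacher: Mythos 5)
Your proposal is correct and follows exactly the route the paper itself uses to establish the unforced Theorem \ref{HJ_theorem} (the local computation of $X_{H,F}\circ\gamma$ versus $T\gamma\circ X_{H,F}^{\gamma}$, then closedness of $\gamma$ to symmetrize the derivatives), with the only genuinely new ingredient — identifying the momentum-side force contribution with $-\gamma^{\ast}F$ via semibasicity of $F$ — handled correctly; note the paper does not actually prove Theorem \ref{teoFor}, deferring to \cite{deLeon2022c}, so your write-up is the natural in-house argument. The one cosmetic point is that you silently read \enquote{Hamiltonian submanifold} in (iii) as the Lagrangian/closedness condition $\gamma^{\ast}\omega_Q=-\dd\gamma=0$, which is consistent with how the paper's preliminaries treat the analogous clause.
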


{In order to show this theorem, observe that $X_{H, F}^\gamma = X_H^\gamma$ and 
\begin{equation}
    X_{H, F} \circ \gamma = X_H \circ \gamma -  F_i \frac{\partial}{\partial p_i}\, .
\end{equation}
along $\gamma$. The rest of the proof is analogous to the one of Theorem \ref{HJ_theorem}.}

\begin{remark}\label{remark:Liouville_integrability_forced}
    Complete solutions $\Phi\colon Q\times \RR^n \to T^\ast Q$ of the Hamilton--Jacobi problem for $(H, F)$ are defined analogously to the unforced case (see Definition~\ref{def_complete_sols}). In \cite{deLeon2022c} we showed that, for each $\lambda\in \R^n$, $\Im \Phi(\cdot, \lambda)$ is a Lagrangian submanifold invariant under the flow of $X_{H, F}$ (see Theorem 4 in \cite{deLeon2022c}). Moreover, we showed that from a complete solution one can construct $n$ constants of the motion in involution, 
    {${f_i =\pi_i \circ \Phi^{-1}} \colon T^\ast Q \to \RR,\, i=1, \ldots, n$} {(see Proposition 8 in \cite{deLeon2022c})},
    {namely, $X_{H,F}(f_i)=0$.}
    {Thus, obtaining a complete solution of the Hamilton--Jacobi problem simplifies solving the equations of motion for $(H, F)$. For instance, one can change the canonical coordinates $(q^i, p_i)$ to $(x_i, y_i)$, with $x_i=f_i,\, i=1, \ldots, n$. However, since $X_{H,F}$ is not, in general, a Hamiltonian vector field, the Liouville--Arnol'd theorem \cite{Arnold1978a, Audin2004, Bolsinov2004, Wiggins2003} cannot be applied to construct action-angle coordinates. In future works, we plan to study the existence of action-angle-like coordinates for forced Hamiltonian systems.}
\end{remark}

\section{Geometric Hamilton--Jacobi theory for hybrid dynamical systems}\label{sec3}

\begin{definition}\label{def:solution_hybrid_HJ}
Consider a hybrid Hamiltonian system $\hybrid_{H}= (T^{*}Q, {S_H}, X_{H}, \Delta_H)$. Let $U_k\subset Q\setminus \pi_Q(S_H)$ be open subsets of $Q$. Let $\mathfrak{U} = \left\{(U_k,\gamma_k)  \right\}$ be a family of closed one-forms $\gamma_k$ on $U_k${, where $k\in \NN$ is a discrete index}. We will say that $\mathfrak{U}$ is a \emph{solution of the hybrid Hamilton--Jacobi problem} for $\hybrid_{H}$ if:
\begin{enumerate}
\item For each $k$, $\gamma_k$ is a solution of the Hamilton--Jacobi equation on $U_k$, namely,
\begin{equation}
    \dd (H \circ \gamma_k) = 0. 
    \label{HJ_eq_hybrid}
\end{equation}
\item If $x\in \partial U_k \cap \pi_Q(S_H)$, then there exists an {index} $l$ such that $x\in \partial U_l \cap \pi_Q(S_H)$ and $\gamma_l$ and $\gamma_k$ are $\Delta_H$-related, i.e.,
\begin{equation}
    \lim_{y\to x} \gamma_l(y) = \Delta_H \left( \lim_{y\to x} \gamma_k(y) \right).
    \label{Delta_related_condition}
\end{equation}
The indices $k$ and $l$ may be different or not. This will depend on the impact map of the problem.
\end{enumerate}
{ 
Unlike continuous systems, where a solution of the Hamilton--Jacobi equation is a single one-form $\gamma$, in the hybrid Hamilton--Jacobi equation a family of one-forms $\gamma_k$ is required. Each one-form $\gamma_k$ will determine the dynamics between the $k$-th and $(k+1)$-th impacts (regarding the initial conditions as the ``$0$-th impact''). For simplicity's sake, along the paper we assume that Zeno effect does not occur so that the impacts are discrete and $k\in \NN$. Nevertheless, one could consider systems that experience Zeno (e.g., the bouncing ball) by replacing the integer index $k$ with a real parameter. It is also worth remarking that the open subsets $U_k$ could have non-empty intersection. As a matter of fact, it is possible to have $U_k = Q \setminus \pi_Q(S_H)$ for all $k$.
}

\end{definition}



\begin{remark}\label{remark:def_Clark}
Clark's definition \cite[Definition VIII.1]{Clark2020} additionally requires the condition 
\begin{equation}
    \lim_{y\to x} \mathcal{A}_l (y)  = \lim_{y\to x} \mathcal{A}_k (y),  
\end{equation}
{where $\mathcal{A}_k\in \Cinfty(U_k)$ and $\mathcal{A}_l\in \Cinfty(U_l)$ are functions such that $\gamma_k = \dd \mathcal A_k$ and $\gamma_l = \dd \mathcal A_l$}. This condition is important for optimal control purposes but irrelevant for ours, so we will not require it.
\end{remark}

The geometric interpretation of the hybrid Hamilton--Jacobi problem is as follows. Given a collection of open subsets $U_k\subset Q \setminus \pi_Q(S_H)$ and closed one-forms $\gamma_k$ on $U_k$, one can project the vector fields $\restr{X_H}{T^\ast U_k}\in \mathfrak{X}(T^\ast U_k)$ along $\gamma_k(U_k)$, obtaining the vector fields $X_H^{\gamma_k}\coloneqq T \pi_Q \circ X_H \circ \gamma_k \in \mathfrak{X}(U_k)$. Additionally, we assume that $\restr{X_H}{T^\ast U_k}$ and $X_H^{\gamma_k}$ are $\gamma_k$-related, i.e., $T\gamma_k \circ X_H^{\gamma_k} =  \restr{X_H}{T^\ast U_k}\circ \gamma$. In other words, the following diagram commutes:
\begin{center}
\begin{tikzcd}
T^\ast U_k \arrow[rr, "\restr{X_H}{T^\ast U_k}"] \arrow[d, "\restr{\pi_Q}{ T^\ast U_k}"] &  & T T^\ast U_k \arrow[d, "\restr{T\pi_Q}{T^\ast U_k}"']    \\
U_k \arrow[rr, "X_H^{\gamma_k}"'] \arrow[u, "\gamma_k", bend left, shift left]    &  & TU_k \arrow[u, "T\gamma_k"', bend right, shift right]
\end{tikzcd}
\end{center}
Therefore, $\gamma_k$ is a solution of the Hamilton--Jacobi problem for $\restr{H}{\cT U_k}$. Moreover, the one-forms $\gamma_l$ and $\gamma_k$ have to be $\Delta_H$-related on $\partial U_k \cap \partial U_l \cap \pi_Q(S_H)$, i.e., they have to satisfy equation~\eqref{Delta_related_condition}.

\begin{theorem}[Hybrid Hamilton--Jacobi theorem]\label{HJ_theorem_hybrid}
Let $\hybrid_{H}= (T^{*}Q, {S_H}, X_{H}, \Delta_H)$ be a hybrid Hamiltonian system, $U_k\subset Q\setminus \pi_Q(S_H)$ be open subsets of $Q$ and $\mathfrak{U} = \left\{(U_k,\gamma_k)  \right\}$ be a family of closed one-forms $\gamma_k\in \Omega^1(U_k)$.  Then, the following statements are equivalent:
\begin{enumerate}
    \item The family $\mathfrak{U}$ {is a solution of} the hybrid Hamilton--Jacobi equation for $\hybrid_{H}$. 
    {
    \item For every curve $c\colon \R\to Q$ such that 
    \begin{enumerate}
        \item $c\big((t_k, t_{k+1})\big)\subset U_k$, 
        \item $c$ intersects $\pi_Q(S_H)$ at $\{t_k\}_k$,
        \item $c$ satisfies the equations
    \begin{align}
    &\dot c(t) = T \pi_Q \circ X_H \circ \gamma_k \circ c(t), \qquad t_k<t<t_{k+1},\\
    & \gamma_{k+1} \circ c(t_{k+1}) = \Delta_H \circ \gamma_k \circ c(t_{k+1}),
    \end{align}
    \end{enumerate}
    the curve $\tilde{c}\colon \RR \to T^\ast Q$ given by $\tilde{c}(t) = \gamma_k \circ c(t)$ for $t\in[t_k, t_{k+1})$ is an integral curve of the hybrid dynamics.}
\end{enumerate}
\end{theorem}

{
This theorem, with the additional assumption from Remark~\ref{remark:def_Clark}, was proven in \cite{Clark2020} (Theorem VIII.3).
}

\begin{proof}
    By Theorem \ref{HJ_theorem}, equation~\eqref{HJ_eq_hybrid} holds if and only if, for each integral curve $c\colon \mathbb{R}\rightarrow U_k$ of $X_{H}^{\gamma_k}$, the curve $\gamma_k\circ c\colon \mathbb{R}\rightarrow T^\ast U_k$ is an integral curve of $\restr{X_H}{T^\ast U_k}$. Since the dynamics $x(t)$ of $\hybrid_{H}$ are given by the integral curves of $X_H$ for $x(t)\notin S_H$, the curves $x(t)=\gamma_k \circ c(t)$ are the hybrid dynamics on $T^\ast U_k$.
    
    {On the other hand, if $c$ is a continuous curve that intersects $\pi_Q(S_H)$ at $\{t_k\}_k$ and $c(t_{k+1}) \in \partial U_k \cap \partial U_{k+1} \cap \partial \big(\pi_Q(S_H)\big)$, then $\gamma_{k+1}$ and $\gamma_k$ are $\Delta_H$-related if and only if
    \begin{equation}
        \gamma_{k+1} \circ c(t_{k+1}) = \Delta_H \circ \gamma_k \circ c(t_{k+1}) \, .
    \end{equation}
    }
\end{proof}

\begin{definition}\label{def:complete_sols}
{Consider a solution $({\gamma_{k}})_{\lambda}$ of the hybrid Hamilton--Jacobi problem where each one-form $\gamma_k$ depends on $n$ additional parameters $\lambda\in \mathbb{R}^n $, and suppose that $\left\{\Phi_k\colon U_k \times \R^n \to T^\ast U_k \right\}_k$, where $\Phi_{k}(q, \lambda)=({\gamma_{k}})_{\lambda}(q)$, is a family of local diffeomorphisms.} We say that $({\gamma_{k}})_{\lambda}(q)$ is \emph{complete solution of the Hamilton--Jacobi problem} for $\hybrid_{H}$ 
if, for each $\lambda \in \R^n$, there exists a $\mu \in \R^n$ such that $\Image (\Delta_H \circ ({\gamma_{k}})_{\lambda})\subseteq \Image  ({\gamma_{l}})_{\mu}$.

A hybrid Hamiltonian system $\hybrid_{H}= (T^{*}Q, {S_H}, X_{H}, \Delta_H)$ with a complete solution of the Hamilton--Jacobi problem $\{({\gamma_{k}})_{\lambda}(q)\}$ is called an \emph{integrable hybrid Hamiltonian system}.
\end{definition}

{Before the first impact takes place, the initial value of the parameters $\lambda\in \RR^n$ will be fixed by the initial conditions of the dynamics, and the subsequent values of these parameters will be determined by the relation $\operatorname{Im} (\Delta_H \circ ({\gamma_{k}})_{\lambda})\subseteq \operatorname{Im} ({\gamma_{l}})_{\mu}$.}



\begin{remark}\label{remark:Liouville_integrability_hybrid}
    Let $\mathfrak{U} = \left\{(U_k,\Phi_k)  \right\}$ be a complete solution of the hybrid Hamilton--Jacobi problem for $\hybrid_{H}= (T^{*}Q, {S_H}, X_{H}, \Delta_H)$. Clearly, for each $k$, $\Phi_k\colon U_k\times \R^n\to T^\ast U_k$ is a complete solution of the Hamilton--Jacobi problem for $\restr{H}{T^\ast U_k}$. Therefore, each of the open subsets $T^\ast U_k\subset T^\ast Q$ has a foliation on Lagrangian submanifolds invariant under the flow of $\restr{X_H}{T^\ast U_k}$ (see Remark~\ref{remark:Liouville_integrability}). These
    submanifolds are given by $\Im\Phi_k(\cdot, \lambda)$ for each $\lambda\in \R^n$.
    Moreover, the functions $f_{k,i}= \pi_i \circ \Phi_k^{-1}\colon T^\ast U_k\to \R$ are independent constants of the motion in involution. However, in general, they are not hybrid constants of the motion. {The condition $\operatorname{Im} (\Delta_H \circ ({\gamma_{k}})_{\lambda})\subseteq \operatorname{Im} ({\gamma_{l}})_{\mu}$ implies that the impact map $\Delta_H$ maps the border $\operatorname{Im}  ({\gamma_{l}})_{\lambda}\cap S_H$ of each Lagrangian submanifold of $T^\ast U_k$ into a subset of a Lagrangian submanifold $\operatorname{Im} ({\gamma_{l}})_{\mu}$ of $T^\ast U_l$. If $U_k$ is an open cover of $Q\setminus \pi_Q(S_H)$, then the Lagrangian submanifolds $\operatorname{Im} ({\gamma_{l}})_{\mu}$ will foliate the phase space $\cT Q$. The parameters $\mu$ and $\lambda$ are related by the condition $\operatorname{Im} (\Delta_H \circ ({\gamma_{k}})_{\lambda})\subseteq \operatorname{Im} ({\gamma_{l}})_{\mu}$.
    }
    Therefore, an integrable hybrid Hamiltonian system consists of integrable Hamiltonian systems related by the impact map.
    Furthermore, there exist action-angle coordinates in which the flow of $\restr{X_H}{T^\ast U_k}$ trivializes (see \cite{Arnold1978a, Audin2004, Bolsinov2004, Wiggins2003}). Hence, obtaining a complete solution of the Hamilton--Jacobi problem can be useful for computing the dynamics of certain hybrid Hamiltonian systems. The motion between one impact and the next one will be quasi-periodic (as a consequence of Liouville--Arnol'd theorem). {Nevertheless, an integrable hybrid Hamiltonian system may be chaotic. For instance, a billiard (i.e., a particle undergoing free motion within a plane, encountering collisions with a wall) is an integrable hybrid Hamiltonian system and there are chaotic billiards \cite{Chernov2006,K.T1991}}.

    { As a matter of fact, we have recently proven a Liouville--Arnol'd type theorem for hybrid Hamiltonian systems \cite{L.C2023}. 
    A \emph{Liouville integrable hybrid Hamiltonian system} is a $5$-tuple $(\cT Q, S_H, X_H, \Delta_H, F)$, formed by a hybrid Hamiltonian system $(\cT Q, S_H, X_H, \Delta_H)$, together with a map $F=(f_1, \ldots, f_n)\colon \cT Q \to \RR^n$ such that $\operatorname{rank} \T_x F = n$ almost everywhere and the functions $f_1, \ldots, f_n$ satisfy:
    \begin{enumerate}
        \item $X_H(f_i) = 0$ for all $i\in \{1, \ldots, n\}$,
        \item for each connected component $C\subseteq S_H$ and each $a_i\in \Image f_i$, there exists a $b_i\in \Image f_i$ such that
        \begin{equation}\label{eq:condition_generalized_hybrid_constant}
            \Delta\left(\restr{f_i}{C}^{-1}(a_i)\right)  \subseteq f_i^{-1}(b_i)\, ;
        \end{equation}
        \item $\{f_i, f_j\}=0$ for all $i, j\in \{1, \ldots, n\}$, where $\{\cdot, \cdot\}$ denotes the canonical Poisson bracket.
    \end{enumerate} 

    Note that the existence of a complete solution of the Hamilton--Jacobi problem for $\hybrid_H$ is a sufficient condition for it to be a Liouville integrable hybrid Hamiltonian system. 

    \begin{theorem}[Liouville--Arnol'd theorem for hybrid Hamiltonian systems]
        Consider a completely integrable hybrid Hamiltonian system $(\cT Q, S_H, X_H, \Delta_H)$, with $F=(f_1, \ldots, f_n)$, where $n=\dim Q$. Let $M_{\Lambda}$ be a regular level set of $F$, namely, $M_\Lambda= F^{-1}(\Lambda)$ for $\Lambda\in \RR^n$ such that $\operatorname{rank} \T_x F = n$ for all $x\in M_{\Lambda}$. Then:
        \begin{enumerate}
            \item Each regular level set $M_\Lambda$ is a Lagrangian submanifold of $\cT Q$, and it is invariant with respect to the flows of $X_H, X_{f_1}, \ldots, X_{f_n}$.
            \item Any compact connected component of $M_\Lambda$ is diffeomorphic to an $n$-dimensional torus $\mathbb{T}^n$.
            \item For each regular level set $M_\Lambda$ and each connected component $C\subseteq S$, there exists a $\Lambda'\in \RR^n$ such that $\Delta(M_\Lambda \cap C) \subset M_{\Lambda'} = F^{-1}(\Lambda')$. 
            \item On a neighbourhood $U_\lambda$ of $M_\Lambda$ there are coordinates $(\varphi^i, s_i)$ such that
            \begin{enumerate}
                \item they are Darboux coordinates for $\omega$, namely, $\omega = \dd \varphi^i \wedge \dd s_i$,
                \item the action coordinates $s_i$ are functions depending only on the functions $f_1, \ldots, f_n$,
                \item the continuous part hybrid dynamics are given by
                 $$\dot \varphi^i = \Omega^i(s_1, \ldots, s_n),\qquad \dot s_i = 0\, .
                 $$
                 \item In these coordinates, for each connected component $C\subseteq S$, the impact map reads $\Delta_H \colon (\varphi^i_-, s_i^-)\in M_\Lambda \cap C\mapsto  (\varphi^i_+, s_i^+)\in M_{\Lambda'}$, where $s_1^+, \ldots, s_n^+$ are functions depending only on $s_1^-, \ldots, s_n^-$.
            \end{enumerate}
        \end{enumerate}
    \end{theorem}
    }
\end{remark}

{It is worth mentioning that in \cite{Clark2020} (completely) integrable hybrid Hamiltonian systems are defined in terms of Lagrangian submanifolds. 
In that reference (see also \cite{Clark2021}), a submanifold $\mathcal{L}\subset T^\ast Q$ is called a hybrid Lagrangian submanifold if $\mathcal{L}\setminus \partial \mathcal{L}$ is a Lagrangian submanifold of $T^\ast Q$ and $\pi_Q(\partial \mathcal{L})$ is contained in $\pi_Q(\partial S_H)$ (see Definition V.III.6 in \cite{Clark2020}). 
More concretely (cf.~Definitions V.III.5 and V.III.6 in \cite{Clark2020}), a hybrid Hamiltonian system $\hybrid_{H}= (T^{*}Q, {S_H}, X_{H}, \Delta_H)$ is called completely integrable if there exists a foliation $\{\mathcal{L}_\alpha\}$ of $T^\ast Q$ such that, for each leaf $\mathcal{L}_\alpha$,
\begin{enumerate}
   \item \label{Clark_integrable_1_2} $\mathcal{L}_\alpha$ is a hybrid Lagrangian submanifold of $T^\ast Q$,
    \item \label{Clark_integrable_3} $\mathcal{L}_\alpha\subseteq H^{-1}(E)$ for some $E\in \RR$,
    \item \label{Clark_integrable_4} $\Delta_H(\partial \mathcal{L}_\alpha\cap S_H)\subseteq \mathcal{L}_\alpha$.
\end{enumerate}

{ Essentially, condition \ref{Clark_integrable_4} means that $\Delta_H$ maps points from the border of each Lagrangian leaf into points inside the leaf.}


{Let $I\subseteq \NN$ be a set of indices, $\{U_k\}_{k\in I}$ an open cover of $Q\setminus\pi_Q(S_H)$}, 
and $\left\{\Phi_k\colon U_k \times \R^n \to T^\ast U_k \right\}$ a complete solution of the Hamilton--Jacobi problem for $\hybrid_{H}$. Let $\alpha_k\in \RR^n$ such that $\Image \big(\Delta_H \circ \Phi_k(\cdot, \alpha_k)\big) \subseteq \Image \big( \Phi_l(\cdot, \alpha_l)\big)$ for each $k,l\in I$. 
{Definitions~\ref{def:solution_hybrid_HJ} and \ref{def:complete_sols} imply that $\mathcal{L}_\alpha = \cup_{k\in I} \Image \Phi_k(\cdot, \alpha_k)$ satisfies conditions \ref{Clark_integrable_1_2}. 
However, condition \ref{Clark_integrable_3} only holds if
\begin{equation}\label{eq:Clark_integrable}
    H\circ \Phi_k(\cdot, \alpha_k)=H\circ \Phi_l(\cdot, \alpha_l)=E\, ,
\end{equation}
for each $k\in I$. Additionally, condition \ref{Clark_integrable_4} holds if $\alpha_k = \alpha_l$.
}
In summary, an integrable hybrid Hamiltonian system (in the sense of Definition~\ref{def:complete_sols}) is completely integrable (in the sense of Clark) if and only if the open subsets $U_k$ cover $Q$, {$\alpha_k= \alpha_l$} and equation~\eqref{eq:Clark_integrable} holds. 
}

\begin{remark}\label{remark_reconstruction}
    Solutions of the Hamilton--Jacobi problem can be used to reconstruct the dynamics of a hybrid Hamiltonian system $\hybrid_{H}= (T^{*}Q, {S_H}, X_{H}, \Delta_H)$ as follows:
    \begin{enumerate}
        \item \label{remark_reconstruction_item_1} Solve the Hamilton--Jacobi equation \eqref{HJ_eq_hybrid} for each $(\gamma_k)_{\lambda_k}$.
        \item \label{remark_reconstruction_item_2} Label the open subsets $U_k$ so that the initial condition is in $U_0$ and between $k$-th and $(k+1)$-th impacts the system is in $U_k$. Let $t_k\in \R$ denote the instants of the impacts, i.e., $\lim_{t\to t_k^-} c_k(t) \in \pi_Q(S_H)$. 
        \item \label{remark_reconstruction_item_3} Use that $(\gamma_{k+1})_{\lambda_{k+1}}$ and $(\gamma_k)_{\lambda_k}$ are $\Delta_H$-related to compute $\lambda_{k+1}$ in terms of $\lambda_k$, where $\lambda_0$ is determined by the initial conditions.
        \item \label{remark_reconstruction_item_4} Compute the integral curves $c_k\colon (t_k, t_{k+1}) \to U_k$ of $X_H^{\gamma_k}$. 
        \item \label{remark_reconstruction_item_5} The hybrid dynamics are given by $\gamma_k\circ c_k(t)$ for each interval $(t_k, t_{k+1})$.
    \end{enumerate}
\end{remark}

\begin{example}[Circular billiard]\label{example:cicular_billiard}
Consider the hybrid Hamiltonian system $\hybrid_{H}= (T^{*}\RR^2 , {S_H}, X_{H}, \Delta_H)$, with Hamiltonian function $H= \frac{1}{2} \left(p_x^2+p_y^2\right)$, where the domain is $\X= \left\{(x,y, p_x, p_y) \in T^\ast \RR^2 \mid x^2+y^2\leq 1  \right\}$, the switching surface is $S_H = \left\{(x,y, p_x, p_y) \in T^\ast \RR^2 \mid x^2+y^2=1 { \hbox{ and } x p_x + y p_y >0}   \right\}$ and the impact map $\Delta\colon (x, y, p_x^-, p_y^-)\mapsto (x, y, p_x^+, p_y^+)$ is given by
\begin{equation}
    p_x^+ = p_x^- -2x\left(xp_x^- + y p_y^-\right),\,\,
    p_y^+ = p_y^- -2y\left(xp_x^- + y p_y^-\right).
\label{impact_map_billiard}
\end{equation}
They can be determined from the Newtonian law (see Remark \ref{remark:Newtonian_Hamiltonian_impact_law}) with the constraint function $h = 1-x^2-y^2$. 
Suppose that an impact occurs at $(x^\ast, y^\ast)$. Let $\gamma^-$ and $\gamma^+$ denote the solutions of the Hamilton--Jacobi problem before and after the impact, respectively, with $\gamma^\pm = \gamma_x^\pm \dd x + \gamma_y ^\pm \dd y$.  The Hamilton--Jacobi equation \eqref{HJ_eq_hybrid} is written 
\begin{align*}
    0 &= \frac{1}{2}  \dd \left(\left(\gamma_x^\pm\right)^2 + \left(\gamma_y^\pm\right)^2  \right)=\left(\gamma_x^\pm \frac{\partial \gamma_x^\pm} {\partial x} + \gamma_y^\pm \frac{\partial \gamma_y^\pm} {\partial  x}  \right) \dd x
    +\left(\gamma_x^\pm \frac{\partial \gamma_x^\pm} {\partial y} + \gamma_y^\pm \frac{\partial \gamma_y^\pm} {\partial  y}  \right) \dd y
\end{align*}
{Due to the form of this partial differential equation, it is natural to assume}
 that $\gamma$ is separable, i.e., $\gamma_x^\pm$ and $\gamma_y^\pm$ depend only on $x$ and on $y$, respectively.
Then,
\begin{align*}
    &\gamma_x^\pm \frac{\partial \gamma_x^\pm} {\partial x} = 0\quad \hbox{ and } \quad \gamma_y^\pm \frac{\partial \gamma_y^\pm} {\partial  y} = 0,
\end{align*}
so  $\gamma_x^\pm = a^\pm$ and $\gamma_y^\pm = b^\pm$, where $a^-, a^+, b^-, b^+$ are constants. {The parameters $a^-$ and $b^-$ determining the solution $\gamma^-$ before the impact are given by the initial values of the momenta, namely, $a^-=p_x(0)$ and $b^-= p_y(0)$. The parameters of the solution $\gamma^+$ after the impact are determined via the impact map \eqref{impact_map_billiard}:
  }
\begin{align*}
    a^+ = a^- -2x^\ast \left(x^\ast a^- + y^\ast  b^-\right)\hbox{ and }
    b^+ = b^- -2y^\ast \left(x^\ast a^- + y^\ast  b^-\right).
\end{align*}

One can check that the Hamiltonian function $H$ is a hybrid constant of the motion, i.e., $X_H(H)=0$ and $H\circ \Delta_H = \restr{H}{S}$. The angular momentum $\ell = xp_y - y p_x$ is another hybrid constant of the motion. Suppose that, along $\gamma^\pm$, $H$ and $\ell$ take the constant values $E$ and $\mu$, respectively. Then,
\begin{align*}
    & E = \frac{1}{2}\left( a^{-^2} + b^{-^2}\right)
    = \frac{1}{2}\left( a^{+^2} + b^{+^2}\right),\\
    & \mu =
    \left\{
    \begin{array}{ll}
        x(t) b^- -y(t) a^- & t< t^\ast,\\
        x(t) b^+ -y(t) a^+ & t> t^\ast,
    \end{array} \right.
\end{align*}
{
where $t^\ast$ denotes the time at which the impact occurs. Therefore, we can express $a^\pm$ and $b^\pm$ in terms of $E$ and $\mu$:
\begin{align*}
    & a^{-}=x(t)\sqrt{\frac{{2E}}{(x(t))^2+(y(t)+\mu)^2}}, \quad b^{-}=x(t)\frac{y(t)+\mu}{x(t)}\sqrt{\frac{{2E}}{(x(t))^2+(y(t)+\mu)^2}}, \quad t<t^\ast,\\
    & a^{+}=x(t)\sqrt{\frac{{2E}}{(x(t))^2+(y(t)+\mu)^2}}, \quad b^{+}=x(t)\frac{y(t)+\mu}{x(t)}\sqrt{\frac{{2E}}{(x(t))^2+(y(t)+\mu)^2}}, \quad t>t^\ast,\\
\end{align*}
}

{A complete solution of the Hamilton--Jacobi problem for $\hybrid_{H}$ is given by $\left\{(\gamma^-)_{(a^-, b^-)},(\gamma^+)_{(a^+, b^+)}\right\}$, where $(\gamma^\pm)_{(a^\pm, b^\pm)}=a^\pm \dd x + a^\pm \dd y$. Then, $f_1^\pm(x,y,p_x,p_y)=\pi_1 \circ (\Phi^{\pm})^{-1}(x,y,p_x,p_y)=p_x$ and $f_2^\pm(x,y,p_x,p_y)=\pi_2 \circ (\Phi^{\pm})^{-1}(x,y,p_x,p_y)=p_y$ are constants of the motion for $X_H$, but clearly they are not hybrid constants of the motion.}

{Observe that in this case the additional condition from Remark~\ref{remark:def_Clark} does not hold. Indeed, one can write $\gamma^\pm = \dd \mathcal{A}^\pm$, where
$\mathcal{A}^\pm = a^\pm x + b^\pm y$, but
$$\mathcal{A}^+(x^*, y^*)-\mathcal{A}^-(x^*, y^*)-=-2(x^*a^-+y^*b^-)\neq 0\, .$$
}

\end{example}

\begin{example}[Rolling disk hitting fixed walls]\label{example_disk}

Consider a homogeneous circular disk of radius $R$ and mass $m$ moving in the vertical plane $xOy$ (see {\cite[Example 8.2]{Ibort1997}}, and also {\cite[Example 3.7]{Ibort2001}}). Let $(x, y)$ be the coordinates of the center of the disk and $\vartheta$ the angle between a point of the disk and the axis $Oy$. The dynamics of the system is determined by the Hamiltonian $H$ on $T^{*}(\R^2 \times \mathbb{S}^1)$ given by
$H = \frac{1}{2m} \left(p_x^2 +  p_y^2\right) +\frac{1}{2mk^2} p_\vartheta^2$, which is hyperregular since it is mechanical.
{Here $m$ is the mass of the disk and $k$ is a constant such that $mk^2$ is the moment of inertia of the disk.}

Suppose that there are two rough walls at the axis $y=0$ and at $y=h$, where $h=\alpha R$ for some constant $\alpha>1$. Assume that the impact with a wall is such that the disk rolls without sliding and that the change of the velocity along the $y$-direction is characterized by an elastic constant $e$. 
{The condition of rolling without sliding is given by $p_x=R p_{\vartheta}/k^2$, whereas the change of momenta in the $y$-direction can be derived from the Newtonian impact law (see Remark \ref{remark:Newtonian_Hamiltonian_impact_law}) with the constraint function equal to $y$.} {Note that the condition of rolling without sliding is a nonholonomic constraint which is present only on the impact surface. The Hamilton--Jacobi theory for hybrid systems with nonholomic constraints on all the domain is studied in Section~\ref{sec:nonholonomic}.}
When the disk hits one of the walls, the impact map is given by (see {\cite[Example 8.2]{Ibort1997}}, and also {\cite[Example 3.7]{Ibort2001}})
\begin{equation}
    {
  \left(p_x^-, p_y^-, p_\vartheta^-  \right)
  \mapsto \left(\frac{R^2 p_x^- + R p_\vartheta^-}{k^2+R^2}, - ep_y^-, k^2 \frac{R p_x^- + p_\vartheta^-}{k^2+R^2}  \right),}
\end{equation}
where the switching surface is given by
{ $S_H = C_1 \cup C_2$, with
\begin{align}
    & C_1 = \{(x,y,\vartheta,p_x,p_y,p_\vartheta)\mid y=R,\,  p_x=R p_{\vartheta}/k^2 \hbox{ and } p_y<0\}\, ,\\
    & C_2 = \{(x,y,\vartheta,p_x,p_y,p_\vartheta)\mid y=h-R,\,  p_x=R p_{\vartheta}/k^2 \hbox{ and } p_y>0\}\, .
\end{align}
Here the condition $p_x=R p_{\vartheta}/k^2$ comes from the nonholonomic constraint of the walls, whereas the conditions on the sign of $p_y$ ensure that the $y$-component of the momenta (or the velocity) points towards corresponding the wall.
}

We look for a separable solution of the hybrid Hamilton--Jacobi problem, i.e., a collection of one-forms $\gamma_i=\gamma_{x,i}(x) \dd x + \gamma_{y,i}(y) \dd y + \gamma_{\vartheta,i}(\vartheta) \dd \vartheta$. From equation~\eqref{HJ_eq_hybrid}, we obtain
$\gamma_i=a_i \dd x + b_i \dd y + c_i \dd \vartheta$, where $a_i, b_i, c_i$ are constants. The relation between these constants is given by equation~\eqref{Delta_related_condition}:
\begin{equation}
    {
    a_{i+1} = \frac{R^2 a_i + R c_i}{k^2+R^2}\,, \quad b_{i+1} = -e b_i\,,\hbox{ and }c_{i+1} = k^2\frac{R a_i + c_i}{k^2+R^2}\, .}
\end{equation}
We conclude that $\{(\gamma_k)_{(a_k,b_k,c_k)}\}$ is a complete solution of the Hamilton--Jacobi problem for $\hybrid_{H}$. 
{The initial values $(a_0, b_0, c_0)$ correspond with the initial values $(p_x(0), p_y(0), p_{\vartheta}(0))$ of the momenta at time zero.
Each one-form $\gamma_i$ determines a Lagrangian submanifold of $\cT (\RR^2\times \mathbb{S}^1)$, namely,
\begin{equation}
    \mathcal{L}_i = \operatorname{Im} \gamma_i= \left\{ (x, y, \vartheta, p_x, p_y, p_\vartheta)\in \cT (\RR^2\times \mathbb{S}^1) \mid p_x = a_i,\, p_y = b_i,\, p_\vartheta = c_i\right\}\, .
\end{equation}
}
\end{example}


{
It is worth remarking that in the examples above the separability condition has been imposed since it is a natural \textit{ansatz} for Hamiltonians of the form $H = \sum_{i=1}^n a_i p_i^2$ for some constants $a_i$. However, this is by no means a requirement of our theory.
}

{

It is worth remarking that the addition of impacts to a continuous integrable system does not necessarily lead to an integrable hybrid system. As the following example illustrates, condition~\eqref{Delta_related_condition} may not be satisfied.

\begin{example}[Hybrid integrability is not guaranteed by continuous integrability]

Harmonic oscillators are a paradigmatic example of integrable system. Nevertheless, it is possible to consider a harmonic oscillator with impacts which is not an integrable hybrid system. More precisely, consider the hybrid Hamiltonian system $\hybrid_H=(\cT \RR^2, S_H, X_H, \Delta_H)$, where 
the Hamiltonian function $H\colon \cT \RR^2 \to \RR$ given by
\begin{equation}
    H = \frac{1}{2}\left(p_x^2 + p_y^2 + x^2 +y^2\right)\, ,
\end{equation}
in canonical bundle coordinates, while the switching surface $S_H$ and the impact map $\Delta_H$ are as in Example~\ref{example:cicular_billiard}. A complete solution of the (continuous) Hamilton--Jacobi problem for $H$ is given by 
\begin{equation}
    \Phi\colon (x,y, \lambda_x, \lambda_y) \in \RR^2\times \RR^2\mapsto \pm \sqrt{\lambda_x - x^2} \dd x \pm \sqrt{\lambda_y - y^2} \dd y \in \cT \RR^2\, . 
\end{equation}
Consequently,
\begin{equation}
\begin{aligned}
    \Delta_H \circ \Phi (x,y, \lambda_x, \lambda_y) 
    & = \left(\pm (1-2x^2) \sqrt{\lambda_x - x^2} \mp 2xy \sqrt{\lambda_y-y^2}\right) \dd x 
    \\ &\quad 
    + \left(\pm (1-2y^2) \sqrt{\lambda_y - y^2} \mp 2xy \sqrt{\lambda_x-x^2}\right) \dd y\, ,
\end{aligned}
\end{equation}
and there exists no $(\mu_x, \mu_y)\in \RR^2$ such that
$ \Delta_H \circ \Phi (x,y, \lambda_x, \lambda_y) = \Phi(x, y, \mu_x, \mu_y)$ for all $(x, y)\in S_H$.
\end{example}
}



\section{Geometric Hamilton--Jacobi theory for forced hybrid systems}\label{sec4}

\begin{definition}
Consider a forced hybrid Hamiltonian system $\hybrid_{H, F}= (T^{*}Q, {S_H}, X_{H,F}, \Delta_H)$. Let $U_k\subset Q\setminus \pi_Q(S_H)$ be open subsets of $Q$. Let $\mathfrak{U} = \left\{(U_k, \gamma_k)  \right\}$ be a family of closed one-forms $\gamma_k$ on $U_k$. We will say that $\mathfrak{U}$ {is a solution of} the \emph{hybrid Hamilton--Jacobi} equation for $\hybrid_{H, F}$ if:
\begin{enumerate}
\item For each $k$, $\gamma_k$ is a solution of the forced Hamilton--Jacobi equation on $U_k$, namely,
\begin{equation}
    \dd (H \circ \gamma_k) = -\gamma_k^\ast F. 
    \label{HJ_eq_hybrid_forc}
\end{equation}
\item If $x\in \partial U_k \cap \pi_Q(S_H)$, then there exists an $l$ such that $x\in \partial U_l \cap \pi_Q(S_H)$ and $\gamma_l$ and $\gamma_k$ are $\Delta_H$-related, i.e.,
\begin{equation}
    \lim_{y\to x} \gamma_l(y) = \Delta_H \left( \lim_{y\to x} \gamma_k(y) \right).
\label{Delta_related_condition_forc}
\end{equation}

\end{enumerate}

\end{definition}

\begin{theorem}[Forced hybrid Hamilton--Jacobi theorem]\label{HJ_theorem_hybrid_forced}
Let $\hybrid_{H,F}= (T^{*}Q, {S_H}, X_{H,F}, \Delta_H)$ be a forced hybrid Hamiltonian system, $U_k\subset Q\setminus \pi_Q(S_H)$ be open subsets of $Q$ and $\mathfrak{U} = \left\{(U_k, \gamma_k)  \right\}$ be a family of closed one-forms $\gamma_k\in \Omega^1(U_k)$.  Then, the following statements are equivalent:
\begin{enumerate}
    \item The family $\mathfrak{U}$ {is a solution of} the hybrid Hamilton--Jacobi equation for $\hybrid_{H,F}$. 
{
    \item For every curve $c\colon \R\to Q$ such that 
    \begin{enumerate}
        \item $c\big((t_k, t_{k+1})\big)\subset U_k$, 
        \item $c$ intersects $\pi_Q(S_H)$ at $\{t_k\}_k$,
        \item $c$ satisfies the equations
    \begin{align}
    &\dot c(t) = T \pi_Q \circ  X_{H,F} \circ \gamma_k \circ c(t), \qquad t_k<t<t_{k+1},\\
    & \gamma_{k+1} \circ c(t_{k+1}) = \Delta_H \circ \gamma_k \circ c(t_{k+1}),
    \end{align}
    \end{enumerate}
    the curve $\tilde{c}\colon \RR \to T^\ast Q$ given by $\tilde{c}(t) = \gamma_k \circ c(t)$ for $t\in[t_k, t_{k+1})$ is an integral curve of the hybrid dynamics.}
\end{enumerate}
\end{theorem}

\begin{proof}
    The proof follows from the one of Theorem \ref{HJ_theorem_hybrid} by replacing the Hamilton--Jacobi equation \eqref{HJ_eq_hybrid} and the Hamiltonian vector $X_H$ field by the forced Hamilton--Jacobi equation \eqref{HJ_eq_hybrid_forc} and the forced Hamiltonian vector field $X_{H,F}$, respectively.
\end{proof}


Complete solutions are defined as in the unforced case (see Definitions \ref{def_complete_sols} and \ref{def:complete_sols}). The dynamics of $\hybrid_{H, F}$ can be reconstructed by solving the forced Hamilton--Jacobi equation \eqref{HJ_eq_hybrid_forc} for each $(\gamma_k)_{\lambda_k}$ and following {steps} \ref{remark_reconstruction_item_2}-\ref{remark_reconstruction_item_5} from Remark \ref{remark_reconstruction}. (Observe that $T\pi_Q(X_{H,F})=T\pi_Q(X_H)$, so step \ref{remark_reconstruction_item_4} does not change.)


\begin{remark}\label{remark:Liouville_integrability_hybrid_forced}
    Given a complete solution $\left\{\Phi_k\colon U_k \times \R^n \to T^\ast U_k \right\}$, each $T^\ast U_k$ can be foliated into Lagrangian submanifolds invariant under the flow of $\restr{X_{H,F}}{T^\ast U_k}$ (see Remarks~\ref{remark:Liouville_integrability_forced} and \ref{remark:Liouville_integrability_hybrid}). {Moreover, the impact map $\Delta_H$ maps the borders of the Lagrangian submanifolds of $T^\ast U_k$ into subsets of the Lagrangian submanifolds of $T^\ast U_l$.} In addition, the functions $f_{k,i}= \pi_i~\circ~\Phi_k^{-1}\colon T^\ast U_k\to \R$ are constants of the motion for $\left(\restr{H}{\cT U_k}, \restr{F}{\cT U_k}\right)$ but, in general, they are not hybrid constants of the motion. 
    {Therefore, obtaining a complete solution of the Hamilton--Jacobi problem may simplify solving the equations of motion for $\hybrid_{H, F}$. For instance, one may choose coordinates $(x_i, y_i)$ in $U_k$, where $x_i = f_{k,i}$, so that half of the $2n$ equations of motion on $U_k$ become simply $\restr{X_{H,F}}{U_k} (x_i) = 0$.}
\end{remark}

\begin{example}[Rolling disk with dissipation hitting fixed walls]\label{example_disk_forced}

Consider the hybrid system from Example \ref{example_disk}. Suppose that now the disk is additionally subject to the external force
\begin{equation}
     F = B p_y \dd y,
\end{equation}
where $B$ is some constant. 
Then, a one-form $\gamma_k = \gamma_{k,x} \dd x + \gamma_{k,y} \dd y + \gamma_{k,\vartheta} \dd \vartheta$ satisfies equation~\eqref{HJ_eq_hybrid_forc} if and only if
\begin{align}
    &\frac{1}{m} \left[\gamma_{k,x} \frac{\partial \gamma_{k,x}}{\partial x}
    + \gamma_{k,y} \frac{\partial \gamma_{k,y}}{\partial x}
    + \frac{1}{k^2} \gamma_{k,\vartheta} \frac{\partial \gamma_{k,\vartheta}}{\partial x}
    \right] = 0,\\ 
    &\frac{1}{m} \left[\gamma_{k,x} \frac{\partial \gamma_{k,x}}{\partial y}
    + \gamma_{k,y} \frac{\partial \gamma_{k,y}}{\partial y}
    + \frac{1}{k^2} \gamma_{k,\vartheta} \frac{\partial \gamma_{k,\vartheta}}{\partial y}
    \right] = B \gamma_{k,y},\\
    &\frac{1}{m} \left[\gamma_{k,x} \frac{\partial \gamma_{k,x}}{\partial \vartheta}
    + \gamma_{k,y} \frac{\partial \gamma_{k,y}}{\partial \vartheta}
    + \frac{1}{k^2} \gamma_{k,\vartheta} \frac{\partial \gamma_{k,\vartheta}}{\partial \vartheta}
    \right] = 0,\\ 
\end{align}
Suppose that $\gamma_k$ is separable, namely, $\gamma_{k,x}=\gamma_{k,x}(x), \gamma_{k,y}=\gamma_{k,y}(y), \gamma_{k,\vartheta}=\gamma_{k,\vartheta}(\vartheta)$. Then,
\begin{align}
    &\gamma_{k,x} \frac{\dd \gamma_{k,x}}{\dd x} = 0
    \, \gamma_{k,y} \frac{\dd \gamma_{k,y}}{\dd y} = Bm \gamma_{k,y},\hbox{ and }\frac{1}{k^2} \gamma_{k,\vartheta} \frac{\dd \gamma_{k,\vartheta}}{\dd \vartheta} = 0,
\end{align}
so
\begin{equation}
    \gamma_k = a_k\ \dd x + (Bm y + b_k)\ \dd y + c_k\ \dd \vartheta.
\end{equation}
The relation between $(a_k, b_k, c_k)$ and $(a_l, b_l, c_l)$ is given by equation~\eqref{Delta_related_condition_forc}:
\begin{align}
    & a_l = \frac{R^2 a_k   + k^2 R c_k}{k^2+R^2},\\
    & Bmy + b_l = -e(Bmy+b_k),\\
    & c_l = \frac{R a_k   + k^2 c_k}{k^2+R^2}.
\end{align}
Thus,
\begin{equation}
    b_l 
    = \left\{ 
    \begin{array}{ll}
        -eb_k   & \text{for the wall at } y=0,  \\
        -(e+1)Bmh-eb_k   & \text{for the wall at } y=h.  \\
    \end{array}
    \right.
\end{equation}
We conclude that $\{(\gamma_k)_{(a_k,b_k,c_k)}\}$ is a complete solution of the Hamilton--Jacobi problem for $\hybrid_{H, F}$.

\end{example}

\section{Geometric Hamilton--Jacobi theory nonholonomic hybrid systems}\label{sec:nonholonomic}
\subsection{Nonholonomic systems}
Suppose that $(Q,g)$ is a Riemannian manifold. Let $L\colon TQ \to \R$ be a mechanical Lagrangian, namely,
\begin{equation}
    L(q,v) = \frac{1}{2} g_q(v,v) - V(q).
\end{equation}
Then, the Legendre transform is simply the flat isomorphism defined by the metric $g$, namely,
\begin{equation}
\begin{aligned}
    \mathbb{F} L = \flat_g \colon TQ &\to T^\ast Q\\
    \left(q^i, \dot q^i \right) &\mapsto \left(q^i, g_{ij} \dot q^j \right).
\end{aligned}
\end{equation}
Let $H\colon T^\ast Q \to \RR$, where 
\begin{equation}
    H(q,p) = \frac{1}{2} g_q^{-1}(p,p) + V(q),
\end{equation}
be the Hamiltonian function corresponding to $L$, {with $g_q^{-1}$ denoting the inverse of $g$ at $q$.}

Suppose that the system is subject to the (linear) nonholonomic constraints given by the distribution
\begin{equation}
    \mathcal D = \left\{ v \in TQ \mid \mu^a(v) = 0,\ a=1,\ldots, k \right \},
\end{equation}
{where $\mu^a=\mu^a_i(q) \dd q^i$ are constraint one-forms.}
Denote by $\mathcal M = \mathbb{F}L (\mathcal D)$ the associated codistribution. Then, the \emph{nonholonomic vector field} $X_H^{\mathrm{nh}}$ for $H$ is given by
\begin{equation}
    \omega_Q(X_H^{\mathrm{nh}}, \cdot) = \dd H - \lambda_a \ \mu^a,
\end{equation}
with the constraint 
\begin{equation}
    T\pi_Q\left(X_H^{\mathrm{nh}}\right) \in \mathcal{D}.
\end{equation}
Here, $\lambda_a$ are Lagrange multipliers. Next, let $Z_a$ be the vector fields given by $\omega_Q(Z_a, \cdot) = \mu^a$, so that $X_H^{\mathrm{nh}} = X_H - \lambda_a Z_a$.
Locally, $Z_a = \mu^a_i \frac{\partial}{\partial p_i}$, so the projections on $Q$ of the nonholonomic and the Hamiltonian vector fields coincide, i.e., $T\pi_Q\left(X_H^{\mathrm{nh}}\right)= T\pi_Q\left(X_H\right)$.

Hereinafter, assume that $\mathcal{D}$ is a \textit{completely nonholonomic distribution}, that is,
\begin{equation}
    TQ = \mathrm{span} \left\{\mathcal D, [\mathcal D, \mathcal D], [\mathcal D, [\mathcal D, \mathcal D]], \ldots \right\}.
\end{equation}

\begin{theorem}[Ohsawa and Bloch, 2009]\label{HJ_theorem_nh}
Let $\gamma$ be a one-form on $Q$ such that $\operatorname{Im} \gamma \subset \mathcal M$ and
    $\dd \gamma(v,w)=0$ for any $v,w\in \mathcal D$. Then, the following statements are equivalent:
    \begin{enumerate}[label=\roman*)]
        \item For every integral curve $c$ of $T\pi_Q \circ X_H \circ \gamma$, the curve $\gamma \circ c$ is an integral curve of $X_H^{\mathrm{nh}}$.
        \item The one-form $\gamma$ satisfies the nonholonomic Hamilton--Jacobi equation:
        \begin{equation}
            H \circ \gamma = E,
        \end{equation}
        where $E$ is a constant.
    \end{enumerate}
\end{theorem}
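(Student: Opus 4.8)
The plan is to turn assertion i) into a single pointwise equation comparing $X_H^{\mathrm{nh}}$ with the tangent lift of its projection, and then to exploit the two hypotheses on $\gamma$ to identify that equation with the statement that $\dd(H\circ\gamma)$ annihilates $\mathcal D$; complete nonholonomicity will then upgrade this to $H\circ\gamma=E$. First I would set $X_H^{\gamma}=T\pi_Q\circ X_H\circ\gamma$ and recall that $T\pi_Q(X_H^{\mathrm{nh}})=T\pi_Q(X_H)$, so that $X_H^{\gamma}=T\pi_Q\circ X_H^{\mathrm{nh}}\circ\gamma$ as well. Arguing exactly as in Theorem \ref{HJ_theorem} with $\gamma$-related vector fields, assertion i) is equivalent to the relatedness identity $X_H^{\mathrm{nh}}\circ\gamma=T\gamma\circ X_H^{\gamma}$ along $\Im\gamma$. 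I would then study the field $Y:=T\gamma\circ X_H^{\gamma}$: it is tangent to $\Im\gamma$, hence (since $\Im\gamma\subset\mathcal M$) tangent to $\mathcal M$, and its projection $T\pi_Q(Y)=X_H^{\gamma}=\flat_g^{-1}\gamma$ lies in $\mathcal D$ precisely because $\Im\gamma\subset\mathcal M=\flat_g(\mathcal D)$ (note $\partial H/\partial p_i|_{p=\gamma}=g^{ij}\gamma_j$). Thus $Y$ automatically meets the geometric side conditions defining the nonholonomic dynamics (tangency to $\mathcal M$ and projectability into $\mathcal D$), so by uniqueness of $X_H^{\mathrm{nh}}$ the identity $Y=X_H^{\mathrm{nh}}\circ\gamma$, and hence i), reduces to the single symplectic condition $\iota_Y\omega_Q-\dd H\in\spn\{\mu^a\}$.

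Next I would evaluate $\iota_Y\omega_Q-\dd H$ in bundle coordinates, reusing the unconstrained computation carried out just before Theorem \ref{HJ_theorem}; the only new feature is the vertical term coming from $X_H^{\mathrm{nh}}=X_H-\lambda_a Z_a$. The outcome is the semibasic form $\iota_Y\omega_Q-\dd H=-\eta$ (viewed as a $1$-form on $Q$), where $\eta=\dd(H\circ\gamma)+\iota_{X_H^{\gamma}}\dd\gamma$. This is exactly where the condition $\dd\gamma(v,w)=0$ for $v,w\in\mathcal D$ enters: since $X_H^{\gamma}\in\mathcal D$, for every $w\in\mathcal D$ one has $(\iota_{X_H^{\gamma}}\dd\gamma)(w)=\dd\gamma(X_H^{\gamma},w)=0$, so $\iota_{X_H^{\gamma}}\dd\gamma$ annihilates $\mathcal D$, i.e. it belongs to $\spn\{\mu^a\}$. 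Consequently $\eta$ and $\dd(H\circ\gamma)$ are congruent modulo $\spn\{\mu^a\}$, and the condition $\eta\in\spn\{\mu^a\}$ of the previous step is equivalent to $\dd(H\circ\gamma)$ annihilating $\mathcal D$. At this point i) has been shown equivalent to the assertion that $\dd(H\circ\gamma)$ lies in the annihilator of $\mathcal D$.

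Finally I would invoke complete nonholonomicity. That $\dd(H\circ\gamma)$ annihilates $\mathcal D$ means $H\circ\gamma$ is constant along every curve tangent to $\mathcal D$; since $\mathcal D$ is bracket-generating, the Chow--Rashevskii theorem joins nearby points by such horizontal curves, forcing $H\circ\gamma$ to be locally constant, i.e. $H\circ\gamma=E$, which is ii). Running the chain backwards is immediate and uses no bracket-generating hypothesis: from $H\circ\gamma=E$ one gets $\dd(H\circ\gamma)=0\in\spn\{\mu^a\}$, hence $\eta\in\spn\{\mu^a\}$ and therefore i). I expect the main difficulty to lie in the converse direction: one must be certain that the multipliers implicitly solving the relatedness equation when $H\circ\gamma$ is constant are genuinely those of the nonholonomic field, which is why I would argue through the uniqueness and verification of $Y$ rather than by solving for the $\lambda_a$ explicitly. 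The conceptually essential ingredient, however, is the bracket-generating step, since without it one can only conclude conservation of energy along $\mathcal D$ and not the full Hamilton--Jacobi equation $H\circ\gamma=E$.
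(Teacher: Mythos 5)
The paper does not give its own proof of this theorem; it simply refers the reader to Ohsawa and Bloch (2009). Your argument is correct and is essentially a faithful reconstruction of the standard proof from that reference: the reduction of i) to the semibasic identity $\iota_{T\gamma\circ X_H^{\gamma}}\omega_Q-\dd H=-\bigl(\dd(H\circ\gamma)+\iota_{X_H^{\gamma}}\dd\gamma\bigr)$, the use of $X_H^{\gamma}=\sharp_g\gamma\in\mathcal D$ together with $\dd\gamma|_{\mathcal D\times\mathcal D}=0$ to discard the $\iota_{X_H^{\gamma}}\dd\gamma$ term modulo $\mathcal D^{\circ}$, and the Chow--Rashevskii step to pass from $\dd(H\circ\gamma)\in\mathcal D^{\circ}$ to $H\circ\gamma=E$ are all exactly the ingredients of the original argument, and your appeal to uniqueness of $X_H^{\mathrm{nh}}$ (valid here because the Gram matrix $g^{-1}(\mu^a,\mu^b)$ is invertible for a mechanical Hamiltonian) correctly handles the converse direction.
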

See \cite{Ohsawa2009} for the proof.

\begin{remark}
    In the general case, the Hamilton--Jacobi equation is
    {
    \begin{equation}
       \dd (H \circ \gamma) \in \mathcal D^\circ,
    \end{equation}
    where $\mathcal D^\circ$ denotes the annihilator of $ \mathcal D$,
    }
    since it is not possible to apply Chow's theorem (see \cite{Iglesias-Ponte2008}). For an approach based on Lie algebroids, see \cite{deLeon2010a}.
\end{remark}

\begin{remark}
    In \cite{deLeon2014}, the authors considered the Hamilton--Jacobi equation for a Hamiltonian system on an almost-Poisson manifold $(E, \Lambda)$ fibered over a base manifold $Q$. Moreover, they defined complete solutions for the Hamilton--Jacobi equation, and showed that from a complete solution one can define {$m=\mathrm{rank} E$} independent functions in involution with respect to the almost-Poisson bracket (see Definition~4.1 and Proposition~2 in \cite{deLeon2014}, respectively). In particular, in Subsection 5.2 of that paper, they consider the nonholonomic almost-Poisson bracket. Therefore, a complete solution of the Hamilton--Jacobi equation for a nonholonomic system provides {$m=\mathrm{rank}\,\mathcal M$} independent functions in involution with respect to the nonholonomic bracket. However, since almost-Poisson brackets do not necessarily verify the Jacobi identity, these functions in involution are not associated with commuting vector fields as in the Poisson case. Nevertheless, we have recently shown an isomorphism between the nonholonomic bracket and the so-called Eden bracket (see~\cite{deLeon2023, deLeon2023a}), namely,
    $$ \{f, g\}_{nh} = \{f \circ \gamma , g \circ \gamma\} \circ i_{\mathcal{M}}\, ,$$
    for $f, g\in C^\infty(\mathcal{M})$, where $\gamma\colon T^\ast Q \to \mathcal{M}$ is an orthogonal projector {with respect to the metric appearing in the mechanical Hamiltonian,} and $i_\mathcal{M}\colon \mathcal{M} \to T^\ast Q$ is the canonical inclusion. Therefore, roughly speaking, nonholonomic dynamics can be understood as free Hamiltonian dynamics for a modified Hamiltonian function restricting to the appropriate subspaces. Consequently, complete solutions of the Hamilton--Jacobi problem for a nonholonomic system seem to be related to complete integrability in the sense of Liouville. We will explore this relation in future works.
\end{remark}

\subsection{Nonholonomic hybrid systems}

A  hybrid dynamical system is said to be a \emph{nonholonomic hybrid system} if it is determined by $\hybrid_{\mathrm{nh}}\coloneqq (T^\ast Q, \mathcal M, {S_H}, X_{H}^{\mathrm{nh}}, \Delta_H)$, where $X_{H}^{\mathrm{nh}}$ is the nonholonomic vector field associated with the Hamiltonian function $H$ with constraint codistribution $\mathcal{M}$, ${S_H}$ is the switching surface, a submanifold of $T^\ast Q$ with co-dimension one, and $\Delta_H:{S}_H\to \mathcal M$ is the impact map, a smooth embedding.

\begin{definition}
Consider a nonholonomic hybrid system $\hybrid_{\mathrm{nh}}= (T^\ast Q, \mathcal M, {S_H}, X_{H}^{\mathrm{nh}}, \Delta_H)$. Let $U_k\subset Q\setminus \pi_Q(S_H)$ be open subsets of $Q$. Let $\mathfrak{U} = \left\{(U_k, \gamma_k)  \right\}$ be a family of one-forms $\gamma_k$ on $U_k$ such that $\Im\gamma_k\subset \mathcal M$ and $\dd \gamma_k(v,w)=0$ for each $v, w \in \mathcal D$.
We will say that $\mathfrak{U}$ {is a solution of} the \emph{hybrid Hamilton--Jacobi} equation for $\hybrid_{\mathrm{nh}}$ if:
\begin{enumerate}
\item For each $k$, $\gamma_k$ is a solution of the hybrid Hamilton--Jacobi equation on $U_k$, namely,
\begin{equation}
     H \circ \gamma_k = E_k. 
    \label{HJ_eq_hybrid_nh}
\end{equation}
\item If $x\in \partial U_k \cap \pi_{Q}(S_H)$, then there exists an $l$ such that $x\in \partial U_l \cap \pi_{Q}(S_H)$ and $\gamma_l$ and $\gamma_k$ are $\Delta_H$-related, i.e.,
\begin{equation}
    \lim_{y\to x} \gamma_l(y) = \Delta_H \left( \lim_{y\to x} \gamma_k(y) \right).
\label{Delta_related_condition_nh}
\end{equation}
\end{enumerate}
\end{definition}

\begin{theorem}[Nonholonomic hybrid Hamilton--Jacobi theorem]\label{HJ_theorem_hybrid_nh}
Let $\hybrid_{\mathrm{nh}}= (T^\ast Q, \mathcal M, {S_H}, X_{H}^{\mathrm{nh}}, \Delta_H)$ be a nonholonomic hybrid system, $U_k\subset Q\setminus \pi_Q(S_H)$ be open subsets of $Q$ and $\mathfrak{U} = \left\{(U_k, \gamma_k)  \right\}$ be a family of one-forms $\gamma_k\in \Omega^1(U_k)$ such that $\Im\gamma_k\subset \mathcal M$ and $\dd \gamma_k(v,w)=0$ for each $v, w \in \mathcal D$. Then, the following statements are equivalent:
\begin{enumerate}
    \item The family $\mathfrak{U}$ {is a solution of} the hybrid Hamilton--Jacobi equation for $\hybrid_{\mathrm{nh}}$. 
{
    \item For every curve $c\colon \R\to Q$ such that 
    \begin{enumerate}
        \item $c\big((t_k, t_{k+1})\big)\subset U_k$, 
        \item $c$ intersects $\pi_Q(S_H)$ at $\{t_k\}_k$,
        \item $c$ satisfies the equations
    \begin{align}
    &\dot c(t) = T \pi_Q \circ  X_H^{\mathrm{nh}} \circ \gamma_k \circ c(t), \qquad t_k<t<t_{k+1},\\
    & \gamma_{k+1} \circ c(t_{k+1}) = \Delta_H \circ \gamma_k \circ c(t_{k+1}),
    \end{align}
    \end{enumerate}
    the curve $\tilde{c}\colon \RR \to\mathcal{M}$ given by $\tilde{c}(t) = \gamma_k \circ c(t)$ for $t\in[t_k, t_{k+1})$ is an integral curve of the hybrid dynamics.}
\end{enumerate}
\end{theorem}

\begin{proof}
    By Theorem \ref{HJ_theorem_nh}, equation~\eqref{HJ_eq_hybrid_nh} holds if and only if, for each integral curve $c\colon \mathbb{R}\rightarrow U_k$ of $X_{H}^{\gamma_k}$, the curve $\gamma_k\circ c\colon \mathbb{R}\rightarrow T^\ast U_k$ is an integral curve of $\restr{X^{\mathrm{nh}}_{H}}{T^\ast U_k}$. Since the dynamics $x(t)$ of $\hybrid_{\mathrm{nh}}$ are given by the integral curves of $X_H^{\mathrm{nh}}$ for $x(t)\notin S_H$, the curves $x(t)=\gamma_k \circ c(t)$ are the hybrid dynamics on $T^\ast U_k\cap \mathcal M$.
    
    {
    On the other hand, if $c$ is a continuous curve that intersects $\pi_Q(S_H)$ at $\{t_k\}_k$ and $c(t_{k+1}) \in \partial U_k \cap \partial U_l \cap \partial \big(\pi_{Q}(S_H)\big)$, then $\gamma_{k+1}$ and $\gamma_k$ are $\Delta_H$-related if and only if
    \begin{equation}
        \gamma_{k+1} \circ c(t_{k+1}) = \Delta_H \circ \gamma_k \circ c(t_{k+1}) \, .
    \end{equation}
    }
\end{proof}


\begin{definition}\label{def:complete_sols_nh}
Let $m=\operatorname{rank} \mathcal D=\operatorname{rank} \mathcal M$. 
Consider a family of solutions $({\gamma_{k}})_{\lambda}$ depending on $m$ additional parameters $\lambda\subset \mathbb{R}^{m}$, and suppose that $\left\{\Phi_k\colon U_k \times \R^{m} \to T^\ast U_k\cap \mathcal M \right\}$, where $\{\Phi_{k}(q, \lambda)=({\gamma_{k}})_{\lambda}(q)\}$, is a family of local diffeomorphisms. We say that $({\gamma_{k}})_{\lambda}(q)$ is \emph{complete solution of the Hamilton--Jacobi problem} for $\hybrid_{\mathrm{nh}}$ if, for each $\lambda \in \R^{m}$, there exists a $\mu \in \R^{m}$ such that $\operatorname{Im} (\Delta_H \circ ({\gamma_{k}})_{\lambda})\subseteq \operatorname{Im}  ({\gamma_{l}})_{\mu}$.

A nonholonomic hybrid Hamiltonian system $\hybrid_H= (T^{*}Q, {S_H}, X_{H}, \Delta_H)$ with a complete solution of the Hamilton--Jacobi problem $\{({\gamma_{k}})_{\lambda}(q)\}$ is called an \emph{integrable nonholonomic hybrid Hamiltonian system}.
\end{definition}



Solutions of the Hamilton--Jacobi problem can be used to reconstruct the dynamics of a nonholonomic hybrid Hamiltonian system (see Remark~\ref{remark_reconstruction}).


\subsection{Examples}
\subsubsection{The nonholonomic particle.}
    Consider a mechanical system on $Q= \R^3$, with Lagrangian function
    \begin{equation}
        L = \frac{1}{2} \left(\dot x ^2 + \dot y^2 + \dot z^2\right),
    \end{equation}
    and constraint distribution
    \begin{equation}
        \mathcal D = \mathrm{span} \left\{ \frac{\partial}{\partial x} + y \frac{\partial}{\partial z}, \frac{\partial}{\partial y} \right\}. 
    \end{equation}
    Then, the Hamiltonian function is
    \begin{equation}
        H = \frac{1}{2} \left(p_x ^2 + p_y^2 + p_z^2\right),
    \end{equation}
    and the constraint codistribution is
    \begin{equation}
        \mathcal M = \mathrm{span} \left\{ \dd x + y \dd z, \dd y\right\}.
    \end{equation}
    Let $\gamma = \gamma_x \dd x + \gamma_y \dd y + \gamma_ z \dd z$ be a solution of the nonholonomic Hamilton--Jacobi equation. The condition $\operatorname{Im}\gamma \subset \mathcal M$ means that $\gamma$ is a linear combination of $\dd x + y \dd z$ and $\dd y$, which implies that $\gamma_z=y \gamma_x$. The Hamilton--Jacobi equation can the be written as
    \begin{equation}
        E = \frac{1}{2} \left(\gamma_x^2 + \gamma_y ^2 + \gamma_z ^2\right)
         = \frac{1}{2} \left(\gamma_x^2 + \gamma_y ^2 + y^2 \gamma_x^2\right),
    \end{equation}
    so
    \begin{equation}
        \gamma_y = \pm \sqrt{2E-(1+y^2) \gamma_x^2},
    \end{equation}
    and then
    \begin{equation}
        \gamma = \gamma_x \dd x \pm \sqrt{2E-(1+y^2) \gamma_x^2} \dd y + y \gamma_x \dd z.
    \end{equation}
    We finally have to impose the condition $\dd \gamma(v,w)$ for any $v, w \in \mathcal D$. Since $\dd \gamma$ is bilinear and alternating, we just have to impose that 
    \begin{equation}
        \dd \gamma \left(\frac{\partial}{\partial y}, \frac{\partial}{\partial x} + y \frac{\partial}{\partial z}\right) = 0.
    \end{equation}
    Let us assume that $\gamma_x$ only depends on $y$. Then,
    \begin{equation}
        \dd \gamma \left(\frac{\partial}{\partial y}, \cdot \right)
        = \frac{\dd \gamma_x}{\dd y} \dd x + \left( \gamma_x + y \frac{\dd \gamma_x}{\dd y}\right) \dd z,
    \end{equation}
    so
    \begin{equation}
         \dd \gamma \left(\frac{\partial}{\partial y}, \frac{\partial}{\partial x} + y \frac{\partial}{\partial z}\right) 
         = (1+y^2) \frac{\dd \gamma_x}{\dd y} + y \gamma_x.
    \end{equation}
    Therefore,
    \begin{equation}
        \gamma_x = \frac{\lambda}{\sqrt{1+y^2}},
    \end{equation}
    for some $\lambda\in \RR$, and then
    \begin{equation}
        \gamma =  \frac{\lambda}{\sqrt{1+y^2}} \dd x 
        \pm \sqrt{2E- \lambda^2} \dd y
        + \frac{\lambda y}{\sqrt{1+y^2}} \dd z.
    \end{equation}

    Now suppose that there are two walls at the planes $y=0$ and $y=a$ and that the particle impacts with them with an elastic constant $e$, that is,
    \begin{equation}
    \Delta_H (p_x, p_y, p_z) = (p_x, -ep_y, p_z).
    \end{equation}
    {This impact map can be derived from the Newtonian impact law (see Remark \ref{remark:Newtonian_Hamiltonian_impact_law}) for a constraint function $h=y$.}
    Let $\gamma_i$ denote the solution of the Hamilton--Jacobi equation between the $i$-th and $(i+1)$-th impacts, namely,
    \begin{equation}
        \gamma_i =  \frac{\lambda_i}{\sqrt{1+y^2}} \dd x 
        \pm \sqrt{2E_i- \lambda_i^2} \dd y
        + \frac{\lambda_i y}{\sqrt{1+y^2}} \dd z.
    \end{equation}
    Then, the condition $\gamma_{i+1}=\Delta \circ \gamma_i$ implies that
    \begin{equation}
    \begin{aligned}
        \frac{\lambda_{i+1}}{\sqrt{1+y^2}} &= \frac{\lambda_i}{\sqrt{1+y^2}},\\
         \sqrt{2E_{i+1}- \lambda_{i+1}^2} &= e\sqrt{2E_i- \lambda_i^2},\\
        \frac{\lambda_{i+1} y}{\sqrt{1+y^2}} &= \frac{\lambda_i y}{\sqrt{1+y^2}},
    \end{aligned}
    \end{equation}
    that is,
    \begin{equation}
    \begin{aligned}
        & \lambda_{i+1} = \lambda_i,\\
        & E_{i+1} = e^2 E_i + \frac{1+e^2}{2} \lambda_{i+1}^2.
    \end{aligned}
    \end{equation}
    Clearly, fixing $\lambda_i$ and $E_i$ also fixes $\lambda_{i+1}$ and $E_{i+1}$, that is,
    \begin{equation}
        \operatorname{Im} (\Delta_H \circ ({\gamma_{i}})_{(\lambda_i, E_i)})\subseteq \operatorname{Im}  ({\gamma_{i+1}})_{(\lambda_{i+1}, E_{i+1})}.
    \end{equation}
    Moreover, $(x, y, z, \lambda_i, E_i)\mapsto \gamma_i$ is a local diffeomorphism, so $(\gamma_i)_{(\lambda_i, E_i)}$ is a complete solution.

    The Hamiltonian vector field of $H$ is $X_H = p_x \frac{\partial}{\partial x} + p_y \frac{\partial}{\partial y}
        + p_z \frac{\partial}{\partial z}$, so
    \begin{equation}
        X_H^{\gamma_i} = \gamma_{i,x} \frac{\partial}{\partial x} + \gamma_{i,y} \frac{\partial}{\partial y} + \gamma_{i,z} \frac{\partial}{\partial z}
        = \frac{\lambda_{i}}{\sqrt{1+y^2}} \frac{\partial}{\partial x} \pm \sqrt{2E_{i}- \lambda_{i}^2} \frac{\partial}{\partial y} + \frac{\lambda_{i} y}{\sqrt{1+y^2}} \frac{\partial}{\partial z},
    \end{equation}
    whose integral curves $\sigma_i(t) = (x_i(t), y_i(t), z_i(t))$ are given by
    \begin{equation}
    \begin{aligned}
        \dot x_i(t) = \frac{\lambda_{i}}{\sqrt{1+y(t)^2}},\,\dot y_i(t) = \pm \sqrt{2E_{i}- \lambda_{i}^2},\hbox{ and }
        \dot z_i(t) = \frac{\lambda_{i} y(t)}{\sqrt{1+y(t)^2}}.
    \end{aligned}
    \end{equation}
    Hence,
    \begin{equation}
    \begin{aligned}
        & x_i(t) 
        = \frac{\lambda_i}{A_i}\left[ \operatorname{arcsinh}\left(A_i t+y_{i,0}\right) -\operatorname{arcsinh}\left(y_{i,0}\right)\right] + x_{i,0},\\
        & y_i(t) = A_i t + y_{i,0},\\
        & z_i(t) = \frac{\lambda_i}{A_i} \left[\sqrt{(A_i t+y_{i,0})^2+1}- \sqrt{y_{i,0}^2+1}\right]+z_{i,0},
    \end{aligned}
    \end{equation}
    if $A_i\neq 0$, and
     \begin{equation}
    \begin{aligned}
        & x_i(t) = \frac{\lambda_{i}}{\sqrt{1+y_{i,0}^2}} + x_{i,0},\\
        & y_i(t) = y_{i,0},\\
        & z_i(t) = \frac{\lambda_{i} y_{i,0}}{\sqrt{1+y_{i,0}^2}}+z_{i,0},
    \end{aligned}
    \end{equation}
    if $A_i=0$,
    where $A_i=\pm \sqrt{2E_{i}- \lambda_{i}^2}$ and $x_{i+1,0}=x_{i}(t_{i}),\ y_{i+1,0}=y_{i}(t_{i})$ and $z_{i+1,0}=z_{i}(t_{i})$, for $i=0, \ldots, n$, with $t_0=0$ and $t_i$ the instant of the $i$-th impact. Fixing $\lambda_i,\ E_i$ and the initial conditions $(x_{0,0},y_{0,0},z_{0,0})$ determines any trajectory of the hybrid system. These trajectories coincide with the ones obtained by computing the integral curves of the Lagrangian nonholonomic vector field directly (see \cite[Example 4.3.4]{AnahorySimoes2021}).


    

\subsubsection{The generalized rigid body}

Consider a mechanical system with a Lie group as configuration space, namely $Q=G$. Let $\mathfrak g$ denote the Lie algebra of $G$ and $\mathfrak g^\ast$ its dual. Its Lagrangian is the left-invariant function $L\colon TG \simeq G\times \mathfrak{g} \to \R$ given by $L(g, v_g) = \ell(g^{-1} v_g)$, where $\ell\colon \mathfrak g \to \R$ is the reduced Lagrangian, defined by
\begin{equation}
    \ell(\xi) = \frac{1}{2} I_{ij} \xi^i \xi^j,
\end{equation}
for $\xi = (\xi^1, \ldots, \xi^n)\in \mathfrak g$, where $I_{ab}$ are the components of the (positive-definite and symmetric) inertia tensor $\mathbb I \colon \mathfrak g \to \mathfrak g^\ast$ (see for instance \cite{Bloch2005, Fernandez2014, Zenkov2000}).

The Legendre transform of $L$ is given by $\FF L \colon \left(g, \xi^i\right) \mapsto \left(g, I_{ij} \xi^j\right)$, so the Hamiltonian function $H\colon G \times \mathfrak g^\ast \to \R$ is
\begin{equation}
    H = \frac{1}{2} I^{ij} \eta_i\, \eta_j,
\end{equation}
where $I^{ij}$ are the components of the inverse of $\mathbb I$, and $\eta = (\eta_1, \ldots, \eta_n)\in \mathfrak g^\ast$.

The constrained generalized rigid body is subject to the left-invariant nonholonomic constraint
\begin{equation}
   \mathcal D_\mu = \left\{(g,\xi) \in G\times \mathfrak g \mid  \langle \mu, \xi \rangle = \mu_i\, \xi^i = 0\right\},
\end{equation}
where $\mu=(\mu_1, \ldots, \mu_n)$ is a fixed element of $\mathfrak g^\ast$ and $\langle \cdot, \cdot \rangle$ denotes the natural pairing between a Lie algebra and its dual. Then,
\begin{equation}
   \mathcal M_\mu = \FF L(\mathcal D_\mu) = \left\{(g,\eta) \in G\times \mathfrak g^\ast \mid  
   {\eta_i I^{ij} \mu_j=0} \right\}.
\end{equation}
{Since $\mathbb{I}$ is positive-definite, $\FF L=\flat_{\mathbb{I}}$ is a diffeomorphism and $\mathcal M_\mu$ is well-defined globally.}

A solution of the nonholonomic Hamilton--Jacobi problem is a one-form $\gamma\colon G \to G\times \mathfrak g^\ast,\ g\mapsto (g, \gamma_1(g), \ldots, \gamma_n (g))$ satisfying
\begin{equation}
\begin{aligned}
    & H\circ \gamma = \frac{1}{2} I^{ij} \gamma_i \gamma_j = E, \\
    & I^{ij} \gamma_i \mu_j = 0, \\
    & \restr{\dd \gamma}{\mathcal D\times \mathcal D} = 0.
\end{aligned}
\end{equation}

Hereinafter, let us consider $G=\mathrm{SO(3)}$. 
{Let $\{e_1, e_2, e_3\}$  be the canonical basis of $\mathfrak{so}(3)\simeq \R^3$, whose Lie brackets are
\begin{equation}
    [e_1, e_2] = e_3\, , \quad [e_1, e_3] = -e_2\, , \quad [e_2, e_3] = e_1\, ,
\end{equation}
and let $\{e^1, e^2, e^3\}$ be its dual basis.
For simplicity's sake, assume that
\begin{equation}
    \mathbb{I} = I e^1\otimes e^1 + I e^2\otimes e^2 + I e^3\otimes e^3\, ,
\end{equation}
and thus
\begin{equation}
    H(g,\eta) = \frac{1}{2I^2} \left(\eta_1^2 +  \eta_2^2 +  \eta_3^2 \right)\, .
\end{equation}}

The nonholonomic distribution is given by
\begin{equation}
   {\mathcal D_\mu = \left\{(g, \xi) \in \mathrm{SO(3)} \times \mathfrak{so}(3)\mid {\mu}_i \xi^i = 0 \right\}
    = \mathrm{span} \left\{{\mu}_2 e_1 -{\mu}_1 e_2,\ {\mu}_3 e_1 - {\mu}_1 e_3 \right\}.}      
\end{equation}
A solution of the Hamilton--Jacobi problem satisfies
{
\begin{equation} \label{HJ_eqs_SO3}
    \begin{aligned}
        & \gamma_1^2 +  \gamma_2^2 + \gamma_3^2 = 2E I^2, \\
        & \gamma_1 \mu_1 + \gamma_2 \mu_2 + \gamma_3 \mu_3  = 0, \\
        & \dd \gamma \left({\mu}_2 e_1 -{\mu}_1 e_2,\ {\mu}_3 e_1 - {\mu}_1 e_3 \right) = 0.
    \end{aligned}
    \end{equation}
}
Using the first two of the equations~\eqref{HJ_eqs_SO3} we can write
{
\begin{align}
    & \gamma_2 = \frac{\pm\mu_{3}\sqrt{ 2 E I^2 \left(\mu_{2} ^2+\mu_{3} ^2\right)-\gamma_{1} ^2 \left(\mu_{1} ^2+\mu_{2} ^2+\mu_{3} ^2\right)}-\mu_{1}  \mu_{2}  \gamma_{1}}{\mu_{2} ^2+\mu_{3} ^2},\\
    & \gamma_3 = \frac{\pm \mu_{2} \sqrt{ 2 E I^2\left(\mu_{2} ^2+\mu_{3} ^2\right)-\gamma_{1} ^2 \left(\mu_{1} ^2+\mu_{2} ^2+\mu_{3} ^2\right)}-\mu_{1} \mu_{3}   \gamma_{1}}{\mu_{2} ^2+\mu_{3} ^2}.
\end{align}
}
With the \textit{ansatz} $\gamma_1=\lambda_1$ for some $\lambda_1\in \R$, we obtain
{
\begin{equation}
    \gamma = \lambda_1 e^1 + \frac{\mu_{3} \lambda_2-\mu_{1}  \mu_{2}    \lambda_1 }{\mu_{2} ^2+\mu_{3} ^2} e^2 
    + \frac{\mu_{2}  \lambda_2 -\mu_{1} \mu_{3}  \lambda_1}{\mu_{2} ^2+\mu_{3} ^2} e^3,
\end{equation}
where $\lambda_2 = \pm \sqrt{ 2 E I^2 \left(\mu_{2} ^2+\mu_{3} ^2\right)-\lambda_{1} ^2 \left(\mu_{1} ^2+\mu_{2} ^2+\mu_{3} ^2\right)}$. Moreover, using that 
\begin{equation}
    \dd \gamma (X, Y) = \liedv{X} \contr{Y} \gamma + \contr{[X,Y]} \gamma - \liedv{X} \contr{Y} \gamma
\end{equation}
for $X= {\mu}_2 e_1 -{\mu}_1 e_2$ and $Y = {\mu}_3 e_1 - {\mu}_1 e_3$, one can verify that the third of the equations~\eqref{HJ_eqs_SO3} is satisfied.
}

Let us now use the Euler angles $(\alpha, \beta, \varphi)$ as a coordinate system for $\mathrm{SO}(3)$, i.e., the action of an element $g\in\mathrm{SO}(3)$ with coordinates $(\alpha, \beta, \varphi)$ on a point $(x,y, z)\in \R^3$ by rotations is given by the matrix multiplication
\begin{equation}
    \begin{pmatrix}
        \cos \beta \cos \varphi    
        & \sin \alpha \sin \beta  \cos \varphi  -  \cos \alpha  \sin \varphi
        & \cos \alpha \sin \beta \cos\varphi +  \sin \alpha \sin \varphi
        \\
         \cos \beta  \sin \varphi 
        & \sin \alpha \sin \beta  \sin \varphi +  \cos \alpha \cos \varphi
        & \cos \alpha \sin \beta \sin\varphi - \sin \alpha  \cos \varphi
        \\ 
        -\sin \beta 
        & \sin \alpha \cos \beta 
        & \cos \alpha \cos \beta
    \end{pmatrix}
    \begin{pmatrix}
        x \\ y \\ z
    \end{pmatrix}.
\end{equation}
Define the switching surface as the codimension-1 submanifold $S_H$ of $\mathrm{SO(3)}\times \mathfrak{so}(3)^\ast$ given by
\begin{equation}
    S_H = \left\{ \left(\alpha, \beta, \varphi, \eta_1, \eta_2, \eta_3\right)\in \mathrm{SO(3)}\times \mathfrak{so}(3)^\ast\mid \alpha=0 \right\},
\end{equation}
and the impact map $\Delta_H\colon S_H \to \mathrm{SO(3)}\times \mathfrak{so}(3)^\ast$ by
\begin{equation}
    \Delta_H \colon 
    \left(0, \beta, \varphi, \eta_1, \eta_2, \eta_3\right) \mapsto
    \left(0, \beta, \varphi, \varepsilon \eta_1, \eta_2,  \eta_3 \right),
\end{equation}
for some constant $\varepsilon$. 
{Note that this impact map does not preserve the nonholonomic distribution. In other words, it is possible that after the impact the momenta no longer satisfy the nonholonomic constraints. This type of impact maps could be employed to model dynamical systems whose constraints are modified. For instance, a ball that switches from rolling without sliding to moving freely. Nevertheless, the present example is just an illustrative academic example that pretends to show the theory developed. In future works we plan to study more realistic examples of nonholonomic hybrid systems.}


Let $\gamma^-$ and $\gamma^+$ denote the solutions to the Hamilton--Jacobi equation before and after the impact, respectively, where
{
\begin{equation}
    \gamma^\pm = \lambda_1^\pm e^1 + \frac{\mu_{3} \lambda_2^\pm-\mu_{1}  \mu_{2}    \lambda_1^\pm }{\mu_{2} ^2+\mu_{3} ^2} e^2 
    + \frac{\mu_{2}  \lambda_2^\pm -\mu_{1} \mu_{3}  \lambda_1^\pm}{\mu_{2} ^2+\mu_{3} ^2} e^3,
\end{equation}
}
Then, $\gamma^+$ and $\gamma^-$ are $\Delta_H$-related if and only if
{
\begin{align}
    & \lambda_1^+ = \varepsilon \lambda_1^-,\\
    & \mu_{3}\lambda_2^+ - \mu_{1}  \mu_{2}   \lambda_1^+ = \mu_{3}\lambda_2^- - \mu_{1}  \mu_{2}   \lambda_1^- ,\\
    & \mu_{2}  \lambda_2^+ -\mu_{1} \mu_{3}  \lambda_1^+ = \mu_{2}  \lambda_2^- -\mu_{1} \mu_{3}  \lambda_1^-,
\end{align}
}
that is,
{
\begin{align}
    & \lambda_1^+ = \varepsilon \lambda_1^-,\\
    & \lambda_2^+ = \lambda_2^- + (\varepsilon-1) \frac{\mu_1\mu_2}{\mu_3}\lambda_1^-, \\
   & \lambda_2^+ = \lambda_2^- + (\varepsilon-1) 
   \frac{\mu_1\mu_3}{\mu_2}\lambda_1^- ,
\end{align}
}
which has solutions if $\mu_3=\pm \mu_2$ or if $\varepsilon=1$. We conclude that $\hybrid_{\mathrm{nh}} = (T^\ast Q, \mathcal M_\mu, {S_H}, X_{H}^{\mathrm{nh}}, \Delta_H)$ is an integrable nonholonomic hybrid system for $\mu=(\mu_1, \mu_2, \pm \mu_2)\in \mathfrak{so}(3)^\ast$ or for $\Delta_H$ the canonical inclusion.

In the case that $\mu=(\mu_1, \mu_2, \mu_2)\in \mathfrak{so}(3)^\ast$, the nonholonomic distribution is 
\begin{equation}
    \mathcal D_\mu = \left\{(g, \xi) \in \mathrm{SO(3)} \times \mathfrak{so}(3)\mid a_i \xi^i = 0 \right\}
    = \mathrm{span} \left\{a_2 e_1 -a_1 e_2,\ a_2 e_1 - a_1 e_3 \right\}.
\end{equation}
The solutions to the Hamilton--Jacobi equation are
{
\begin{equation}
    \gamma^\pm = \lambda_1^\pm e^1 + \frac{\lambda_2^\pm-\mu_{1}    \lambda_1^\pm }{2\mu_{2}} e^2 
    + \frac{\lambda_2^\pm -\mu_{1} \lambda_1^\pm}{2\mu_{2}} e^3\, .
\end{equation}
}

{
\section{Conclusions and further work}\label{sec:conclusions}

In the present paper, following the seminal work by Clark \cite{Clark2020}, we have developed a Hamilton--Jacobi theory for hybrid Hamiltonian systems, forced hybrid Hamilton systems and nonholonomic hybrid systems. We have illustrated our results with several examples. This theory is based on employing the geometric Hamilton--Jacobi theory for Hamiltonian, forced Hamiltonian or nonholonomic systems on the continuous parts and enforcing the solutions between impacts to ``glue adequately''.

An interesting future work would be to develop a more general notion of hybrid integrability which does not depend on the integrability of the underlying continuous system. Additionally, we would like to study systems experiencing Zeno effect, that is, impacts are no longer discrete. As it has been mentioned throughout  the paper, another open problem is the existence of action-angle like coordinates on forced Hamiltonian systems and nonholonomic systems. Furthermore, we would like to delve on nonholonomic hybrid systems, considering impact maps which may or may not preserve the nonholonomic distribution. This could be employed, for instance, for modeling systems whose dynamics switch from a motion with nonholonomic constraints to a free motion, e.g., a ball which first rolls without sliding and then begins to slide. 
}

\section*{Data availability statement}

No new data were created or analyzed in this study.

\section*{Acknowledgements}
The authors are thankful to the referees for their insightful comments and constructive feedback which have improved the overall quality of the paper.
L.~Colombo, M.~de León and A.~López-Gordón acknowledge financial support from Grants PID2019-106715GB-C21, PID2022-137909NB-C21 and RED2022-134301-T funded by MCIN/AEI/ 10.13039/501100011033. M.~E.~Eyrea Irazú received the support of CONICET. M.~de León and A.~López-Gordón also recieved support from the Grant CEX2019-000904-S funded by MCIN/AEI/ 10.13039/501100011033. A.~López-Gordón would also like to thank MCIN for the predoctoral contract PRE2020-093814.


\let\emph\oldemph
\printbibliography

\end{document}